\newcommand{\MyGraph}[1]{}
\newcommand{\NotRR}[1]{#1}
\newcommand{\InRR}[1]{}
\newcommand{\reffig}[1]{Fig.~\ref{#1}}
\newcommand{\refsec}[1]{section~\ref{#1}}
\newcommand{\mymath}[1]{$$#1$$}
\newcommand{\cu}[1]{\mathcal{#1}}
\newcommand{\FF}{{\mathbb{F}}}
\newcommand{\XXX}[1]{}
\newcommand{\remove}[1]{}
\newcommand{\ForgetXXX}[1]{}
\newcommand{\myQED}{\hfill$\Box$}
\newcommand{\mylabel}[1]{\nllabel{#1}%
}
\newcommand{\Ecost}{E_\mathrm{cost}}
\newcommand{\Erel}{E_\mathrm{rel-cost}}
\newcommand{\VV}{\cu{V}}
\newtheorem{theorem}{Theorem}
\newcommand{\dragon}{DRAGON}
\newcommand{\CUT}[1]{}
\newcommand{\LONG}[1]{}
\newcommand{\MEDIUM}[1]{}
\newcommand{\DCAST}{DRAGONCAST}
\newcommand{\thetitle}[0]{Wireless Broadcast with Network Coding\\
in Mobile Ad-Hoc Networks: DRAGONCAST}
\newcommand{\authorlist}[0]{Song Yean Cho, C\'edric Adjih}
\begin{abstract}

Network coding is a recently proposed method for transmitting data,
which has been shown to have potential to
improve wireless network performance.
We study network coding
for one specific case of multicast, broadcasting, from one source
to all nodes of the network.

We use network coding as a loss tolerant, energy-efficient, method 
for broadcast. Our emphasis is on mobile networks. Our contribution
is the proposal of DRAGONCAST, a protocol to perform
network coding in such a dynamically evolving environment. 
It is
based on three building blocks:
a method to permit real-time decoding of network coding,
a method to adjust the network coding transmission rates,
and a method for ensuring the termination of the broadcast.

The performance and behavior of 
the method are explored experimentally by simulations;
they illustrate the excellent performance of the protocol.

\end{abstract}
\begin{document}
\RRNo{6569}
\makeRR

\tableofcontents
\newpage

\newtheorem{IEEEproof}{Proof:}
\newenvironment{IEEEeqnarray*}[1]{\begin{eqnarray*}}{\end{eqnarray*}}
\newcommand{\QED}{\myQED}

\section{Introduction}
\label{sec:introduction}

The concept of {\em network coding},
where intermediate nodes mix
information from different flows, was introduced by
seminal work from Ahlswede, Cai, Li and Yeung \cite{bib:ACLY00}.
Since then, a
rich literature has flourished
for both theoretical and practical aspects.
In particular, several results have established network coding
as an efficient method to broadcast data
to the whole wireless networks
(see Lun et al. \cite{bib:LRMKKHAZ06} or Fragouli et al. \cite{Bib:FWB06}
for instance),
when efficiency consists in: minimizing the total number
of packet transmissions for broadcasting from the source to all nodes
of the network.

From an information-theoretic point of view, the case of broadcast
with a single source in a static network is well understood,
see for instance Deb. et al \cite{bib:DEHKKLMR05}
or Lun et al. \cite{bib:LMKE07} and their references.
In practical networks, the simple method
\emph{random linear coding} from Ho et al. \cite{bib:HKMKE03} may be used
but several features should be added. Examples
of practical protocols for multicast, are CodeCast from
Park et al. \cite{bib:PGLYM06} or MORE of Chachulski
et al. \cite{bib:CJKK07}. Three practical features that this article addresses
are: real-time decoding, termination, and retransmission rate.

$\bullet$ Real-time decoding: one desirable feature is the ability
to decode without waiting for the whole set of (coded) packets
from the source beforehand: this has been previously achieved by
slicing the source stream in successive sequences of packets,
called generations, and by exclusively coding packets of the same
generation together (as in Chou et al. \cite{bib:CWJ03}, Codecast
\cite{bib:PGLYM06}, and MORE \cite{bib:CJKK07}). Then decoding is
performed generation per generation.

$\bullet$ Termination:
a second related feature is the ability to be able to get
and decode all packets,
at the end of the transmission or generation, even in cases
with mobility and packet losses.
A specific protocol may be added: a termination protocol.

$\bullet$ Retransmission (rate): this is related to functioning of
random linear coding. Every node receives packets, and from time
to time, will retransmit coded packets. As indicated in
\refsec{sec:netcod-theory}, the optimal retransmission fixed rates
may be computed for static networks ; however in a mobile networks
changes of topology would cause optimal rates to evolve
continuously\footnote{also when loss probabilities evolve in a
unknown manner}. Hence a network coding solution should
incorporate an algorithm to determine when to retransmit packets
and how many of them, such as the ones in Fragouli et al.
\cite{Bib:FWB06}, or MORE \cite{bib:CJKK07}.

In this article, we propose a protocol for broadcast in wireless networks:
\DCAST{}.
It provides the three previous features in a novel way and
is based on simplicity and universality.
Unlike previous approaches, it does not use
explicitly or implicit knowledge relative to the topology (such as
the direction or distance to the source, the loss rate of the links, \ldots),
hence is perfectly suited to ad-hoc networks with high mobility.

It uses piggybacking of node state information on coded
packets.
One cornerstone of \DCAST{} is the real-time decoding method,
SEW (Sliding Encoding Window):
it does not use the concept of generation; instead,
the knowledge of the state of neighbors is used
to constrain the content of generated coded packets.
The other cornerstone of \DCAST{} is a rate adjustment method:
every node is retransmitting coded packets with a certain rate; this rate
is adjusted dynamically. Essentially, the rate of the node increases
if it detects some nodes that lack too many coded packets in the current
neighborhood. This is called a ``dimension gap'', and the adaptation
algorithm is a Dynamic Rate Adjustment from Gap with Other Nodes (DRAGON).
Finally, a termination protocol is integrated.

The rest of the paper is organized as follows:
\refsec{sec:netcod-practical} provides some background about
network coding in practical aspects, \refsec{sec:netcod-theory}
presents some in theoretical aspects,
\refsec{sec:dragoncast}
details our approach and protocols,
\refsec{sec:evaluation_metrics} explains evaluation metrics,
\refsec{sec:simulations} analyzes performance from
experimental results and \refsec{sec:conclusion} concludes.

\section{Practical Framework for Network Coding}
\label{sec:netcod-practical} In this section, we present the
known practical framework for network coding (see also
Fragouli et al. \cite{Bib:netcode-primer} for tutorial) that is
used in this article.

\subsection{Linear Coding and Random Linear Coding}
\label{sec:netcod-linear}
Network coding %
differs from
classical routing by permitting coding at
intermediate nodes.
One possible coding algorithm is linear coding that performs only linear
transformations through addition and multiplication
(see Li et al. \cite{bib:SYR03}
and Koetter et al. \cite{bib:KM03}).
Precisely, linear coding assumes identically
sized packets and views the packets as vectors on a fixed Galois
field $\FF_q^n$.
In the case of single source multicasting, all packets
initially originate from the source, and therefore any coded packet
received at a node $v$ at any point of time is a linear
combination of some source packets as:
 \vspace{-3mm}
$$\mathrm{i}^{th}\mathrm{received~coded~packet~at~node~}v:
 y_i^{(v)} = \sum_{j=1}^{j=k} a_{i,j} P_j\vspace{-2mm}$$
where the $(P_j)_{j = 1, \ldots, k}$ are $k$ packets generated
from the source. The sequence of coefficients for a coded packet
$y_i^{(v)}$(denoted ``information vector'') is  $[ a_{i,1}, a_{i,2},
\ldots, a_{i,n}]$ (denoted ``global encoding vector'').

When a node generates a coded packet with linear coding, an issue
is how to select coefficients.
Whereas centralized
deterministic methods exist, %
Ho and al. \cite{bib:HKMKE03} presented a novel coding algorithm, which does
not require any central coordination. The coding algorithm is
\emph{random linear coding}: when a node transmits a packet, it computes
a linear combination of all data possess with randomly selected
coefficients $( \gamma_i )$, and sends the result of the linear
combination: $\mathrm{coded\_packet =} \sum_i \gamma_i p_i^{(v)}$. 
In practice, a special header
containing the coding vector of the transmitted packet may be 
added
as proposed by Chou et al. \cite{bib:CWJ03}.

\subsection{Decoding, Vector Space, and Rank}
\label{sec:netcod-decoding}

A node will recover the source packets $\{ P_j \}$ from the received packets
$\{ p_i^{(v)} \}$, considering the matrix of coefficients $\{
a_{i,j} \}$ in \refsec{sec:netcod-linear}. Decoding amounts to
inverting this matrix, for instance with Gaussian elimination.

Thinking in terms of coding vectors, at any point of time, it is
possible to associate with one node $v$, the \emph{vector space},
$\Pi_v$ spawned by the coding vectors, and which is identified
with the matrix. The dimension of that vector space, denoted
$D_v$, $D_v \triangleq \dim \Pi_v$, is also the \emph{rank} of the
matrix. In the rest of this article, by abuse of language, we will
call \emph{rank of a node}, that rank and dimension. The rank of a
node is a direct metric for the amount of useful received packets,
and a received packet is called \emph{innovative} when it
increases the rank of the receiving node. Ultimately a node can
decode all source packets when its rank is equal to the the total
number of source packets (\emph{generation size}). See also
\cite{Bib:FWB06,bib:CWJ03}. When anode will recover the source
packets at once only at the end of network coding transmission, the
decoding process is called as \emph{``block decoding''}.

\subsection{Rate Selection}
\label{sec:netcod-ratesel}

When using random linear coding,
the rate of each node should be decided.
For static networks, the optimal rates
with respect to either energy-efficiency, or 
capacity maximization may be computed as in the references in
\refsec{sec:netcod-theory}.

For dynamic networks, the rate may evolves with time, and 
in our framework we assume 
a ``rate selection algorithm'': at every point of time, the
algorithm is deciding the rate of the node. We denote $\VV$ the
set of nodes, and $C_v(\tau)$ the rate of the node $v \in \VV$ at
time $\tau$. Then, random linear coding operates as indicated on algorithm~\ref{alg:RLC}.%
\NotRR{\vspace{-1mm}}
\begin{algorithm}[hbt!]
\SetLine \label{alg:RLC}\caption{Random Linear Coding with Rate Selection}
{\bf Source scheduling:} the source transmits sequentially 
$D$ vectors
(packets)
with rate $C_s$.\mylabel{NCAL1:10}

{\bf Nodes' start and stop conditions:}  The
nodes start transmitting when they receive the first vector but
they continue transmitting until themselves {\bf and their current
neighbors} have enough vectors to recover the
 $D$ source packets. \mylabel{NCAL1:20}

{\bf Nodes' scheduling:} every node $v$ retransmits linear combinations
of the vectors it has, and waits for a delay computed from
the rate distribution.
\end{algorithm}

With this scheduling of Algorithm~\ref{alg:RLC}, the changing
parameter is the delay, and we choose to compute it as an
approximation from the rate $C_v(t)$ as: $\mathrm{delay \approx~}
1/C_v(t)$.

\section{Theoretical Performance of Wireless Network Coding}

\label{sec:netcod-theory}
For static networks, several important results
exist for network coding in
the case of single source multicast.

First, it has been shown that the simple method of 
\emph{random linear coding} from Ho et al.
\cite{bib:HKMKE03} could asymptotically achieve
maximal multicast capacity (optimal performance),
and also optimal energy-efficiency (see \cite{bib:LRMKKHAZ06}).
Second, for energy-efficiency,
only the average rates of the nodes are relevant.
Third, the optimal average rates may be found in polynomial time
with linear programs as with Wu et al. \cite{bib:WCK05}
Li et al. \cite{bib:LLJL05},
Lun et al. \cite{bib:LRMKKHAZ06}\footnote{``optimal'', again, in the sense
of energy-efficiency, and assuming transmissions without interferences
-- with our linear cost model, energy-efficiency is invariant by a scaling 
of the rates, hence we are assuming that the rates are scaled to be well below
channel capacity. Therefore, the capacity limits of the wireless medium and
the impact of interferences or of the scheduling, are a peripherical issue for
this perticular problem of energy-efficiency, which is entirely different from
the issue of maximum capacity, and from practical issues when the source
has an immutable rate}.
Last, performing 
random linear coding, with a source rate slightly lower
than the maximal one, will allow to decode all packets
in the long run (when time grows indefinitely,
see \cite{bib:LMKE05,bib:LMKE07}).

For mobile ad-hoc networks,
if one desires to use the optimal rates at any point
of time, an issue is that they are a function of the topology,
which should then also be perfectly known.

\section{Our Approach: DRAGONCAST}
\label{sec:dragoncast}

As mentioned in \refsec{sec:introduction}, our contribution is
a method for broadcast from a single source to the entire network with
network coding.

It is based on known principles described in \refsec{sec:netcod-ratesel};
and the general framework of our protocol is described in
\refsec{sec:dragoncast-framework}. There are three components
in this framework:
\begin{compactitem}
\item SEW, a coding method to allow real-time decoding of the packets,
  described in \refsec{sec:realtime-decoding}.
\item DRAGON, a rate selection algorithm, 
proposed in \cite{bib:CA08} (extended version in \cite{bib:CA07})
and summarized in \refsec{sec:heuristics}
\item A termination protocol described in \refsec{sec:termination}.
\end{compactitem}

\subsection{Framework for Broadcast with Network Coding}
\label{sec:dragoncast-framework}

In this section, we briefly describe our practical framework for
broadcast protocols. It assumes
the use of random linear coding. It further details
the basic operation
presented on algorithm~\ref{alg:RLC}, and appears
in algorithm~\ref{alg:protocol}.

\begin{algorithm}[hbt!]
\SetLine \label{alg:protocol}

\caption{Framework for Broadcast with Network Coding}

{\bf Source data transmission scheduling:} the source transmits
sequentially
$D$ vectors (packets)
with rate $C_s$.\mylabel{NCAL3:10}

{\bf Nodes' data transmission start condition:}  the nodes start
transmitting a vector when they receive the first vector.
\mylabel{NCAL3:20}

{\bf Nodes' data storing condition:} the nodes store a received
vector in their local buffer only if the received vector has new
and different information from the vectors that the nodes already
have. \mylabel{NCAL3:30}

{\bf Nodes' termination conditions:}  the nodes
continue transmitting until themselves {\bf and their current
known neighbors in their local information base} have enough
vectors to recover the
 $D$ source packets. \mylabel{NCAL3:40}

{\bf Nodes' data transmission scheduling:} every node retransmits
linear combinations of the vectors in its local buffer after
waiting for a delay computed from the rate selection.
\mylabel{NCAL3:50}

{\bf Nodes' data transmission restart condition:} When one node
receives a notification indicating that one neighboring node
requires more vectors to recover the $D$ source packets and it
has already stopped data transmission, the node re-enters in a
transmission state.
\mylabel{NCAL3:60}

\end{algorithm}

As described in algorithm~\ref{alg:protocol}, the source initiates
broadcasting by sending its first original data packets.
Other nodes initiate transmission of encoded data upon receiving
the first coded packet, and stay in a transmission state where
they will transmit packets with an interval decided by the rate
selection algorithm. Upon detection of termination, they will stop
transmitting.

\subsection{SEW: Encoding for Real-time Decoding}
\label{sec:realtime-decoding}

In this section, we propose a method for real-time decoding, which
allows recovery of some source packets
without requiring to decode all source packets at once.
This section is organized as follows: we first explain the
decoding process and the concept of real-time decoding in
\refsec{sec:intro_decoding}, then introduce our method for
real-time decoding itself in \refsec{sec:SEW}.

\subsubsection{Overview of Real-time Decoding}
\label{sec:intro_decoding} In this section, we explain the general
decoding process. Decoding is a process to recover the source
packets from accumulated coded packets inside a node. As explained
in \refsec{sec:netcod-linear}, any received coded packets are
originated from the source, and are a set of linear combinations
of the original source packets as represented in
\reffig{fig:acculated_packets}. In \reffig{fig:acculated_packets},
$(P_j)_{j = 1, \ldots, k}$ are $k$ packets generated from the
source, and the set $\{ y_i^{(v)} \}$ is the set of packets that
were received by a node $v$. The sequence of coefficients for
$y_i^{(v)}$, $[ g_{i,1}, g_{i,2}, \ldots, g_{i,k}]$ is the
\emph{global encoding vector} of coded packet ( $y_i^{(v)}$).

Considering the matrix of coefficients $[ g_{i,j} ]$ of a set of
coded packets inside a node, a node can recover the source packets
$[ P_j ]$ from the accumulated packets $[ p_i^{(v)} ]$ if the
matrix of coefficients has full rank. Then, decoding amounts to
inverting this matrix, for instance with Gaussian elimination as
seen in Figure ~\ref{fig:gaussin_elimination}.

\newcommand{\artwidth}{.25\textwidth}
\newcommand{\rrwidth}{.46\textwidth}

\begin{figure}[htp!]
\centering \subfigure[][A set of coded packets in a local buffer of
a node]{\label{fig:acculated_packets}
\includegraphics[width=\rrwidth]{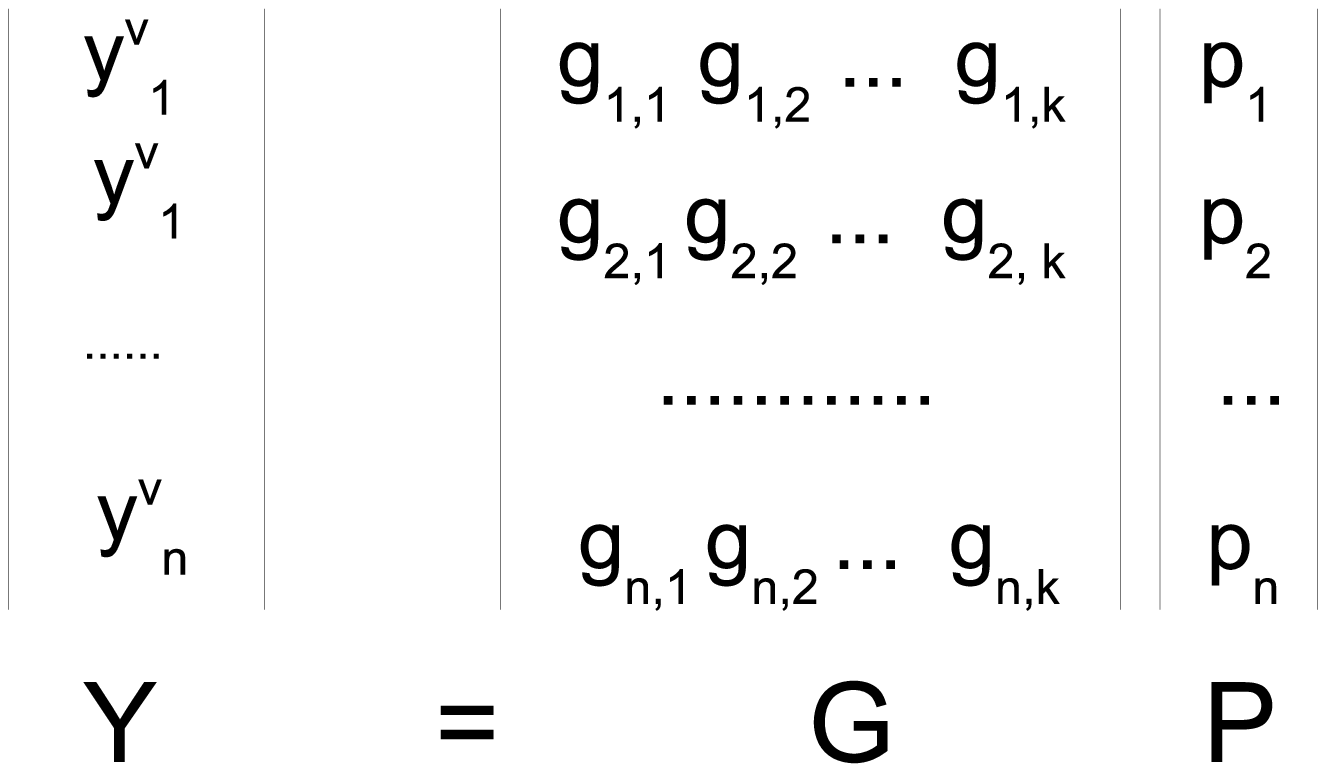}
}
\subfigure[][Decoding with Gaussian Elimination with
k=n]{\label{fig:gaussin_elimination}
\includegraphics[width=\rrwidth]{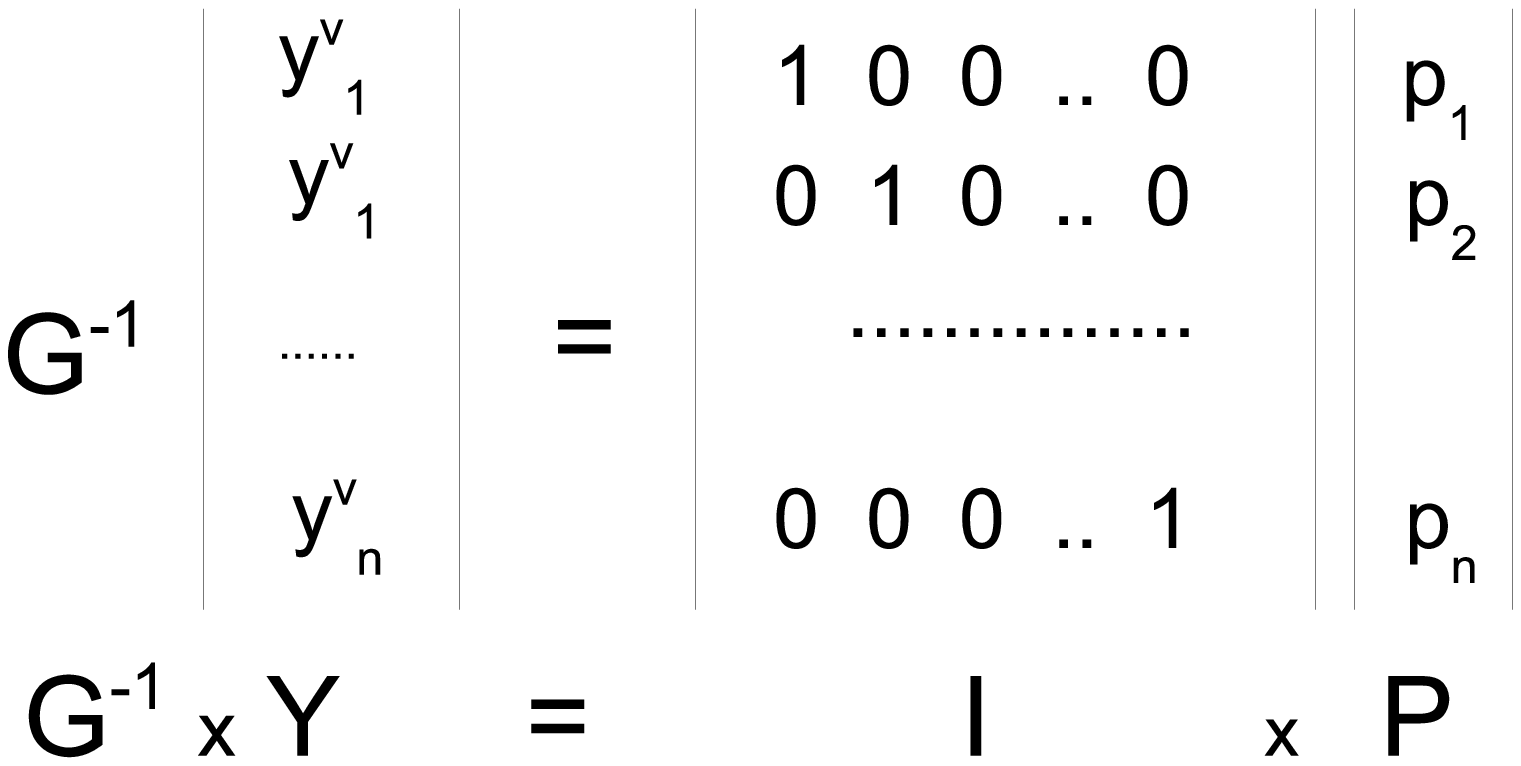}
} \vspace{-1mm} \caption{Decoding at a Node}
\label{fig:decoding_process_overview}
\end{figure}

In the worst case, a node may have to wait until it has sufficient
information to decode all packets at once (block decoding).
Because block decoding delays recovery of source packets until the
rank of a node reaches at least the generation size, the delay could be
rather large. In order to shorten the delay of the block decoding,
Chou et al. \cite{bib:CWJ03}
suggested that an early decoding process could be possible by
recovering some source packets before a node receives enough data
for block decoding, but did not
specify a method to ensure it.
The early decoding process uses the fact that
partial decoding is possible \cite{bib:CWJ03} if a subset
of encoding vectors could be combined by Gaussian elimination,
yielding a lower triangular part of the matrix
as seen in \reffig{fig:early-decoding}.
Notice that packets
forming the lower triangular part
 do not need to be on sequential rows
inside the nodes' buffers and rows of the packet could be
non-continuous in a matrix of the global encoding vectors and
information vectors.

\newcommand{\artwidthbis}{.4\textwidth}
\newcommand{\rrwidthbis}{.8\textwidth}

\begin{figure}[htp]
 \center
\includegraphics[width=\rrwidthbis]{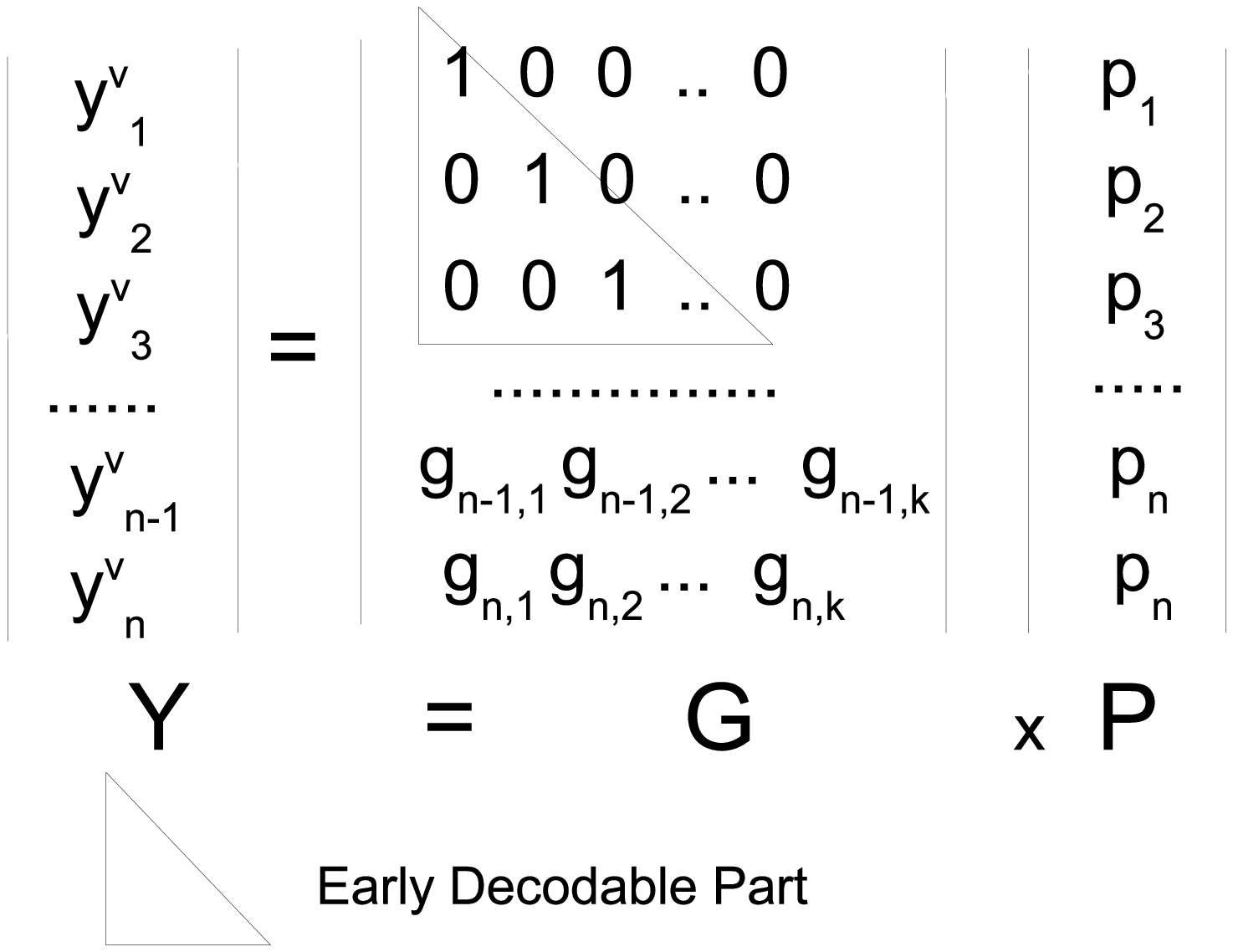}
\caption{ Low Triangle in Global Encoding Vectors in Local Buffer}
\label{fig:early-decoding}
\end{figure}

An explicit mechanism to permit for early decoding is useful, since
when the source rate approaches its ``maximum broadcast rate'',
in other terms as the source rate approaches optimality, the probability
of being able to partially decode after a fixed time decreases
(as implied by \cite{bib:LMKE05}).

\subsubsection{SEW (Sliding Encoding Window)}
\label{sec:SEW}
In this section, we introduce our real-time decoding method,
Sliding Encoding Window (SEW).
In order to enable real-time
decoding, it ensures the
existence of a  low triangle in global encoding vectors saved in a node.
Hence the existence of an early decodable part as in
\reffig{fig:early-decoding}.

Our approach is deliberatly simpler than most coding schemes,
including for instance LT codes \cite{bib:L02},
Growth Codes \cite{bib:KMFR06} or opportunistic coding approaches
such as MORE \cite{bib:CJKK07}. Our rationale starts from the observation that
according to \cite{bib:LMKE07} for instance, random linear coding is
assymptotically capacity-achieving ; in other words, in theory, 
a sophisticated
coding scheme is not necessary (ignoring the header overhead).
Our intuition, is
that adding simple constraints (the ones in SEW) to random linear
coding, we will still be able to be able to perform near to
the performance of random linear coding (which is asymptotically optimal).
Compared to other approaches, SEW has the added benefit of making few 
assumptions
on the communication characteristics (loss probability, stationarity, 
average number of
neighbors, direction of the source or of the destinations, \ldots).

\newcommand{\artboxwidth}{8cm}
\newcommand{\rrboxwidth}{0.95\textwidth}

The key of SEW, is to ensure the existence of an early decoding part,
and to do so, 
the method SEW
relies on two properties:\\
\fbox{
\begin{minipage}{\rrboxwidth}
Principles of SEW:
\begin{compactitem}
\item SEW coding rule: generates only coded packets that are
linear combinations
   of the first $L$ source packets, where $L$ is a quantity that increases
   with time.
\item SEW decoding rule: when decoding, performs a Gaussian
elimination,
    in such a way
    that one coded packet is only used to eliminate the source packet
    with the highest possible index (i.e. the latest source packet).
\end{compactitem}
\end{minipage}
}~\\

Before detailing the insights behind these rules, we first define notations:
the high index of a node, $I_\mathrm{high}$, and the low index of a node,
$I_\mathrm{low}$. As explained in \refsec{sec:intro_decoding}, a coded
packet is a linear combination of source packets. If we assume
that the most recently generated source packet has always the highest
\emph{sequence number},
that is if the source is successively sending
packets $P_1$, $P_2$, $P_3$, \ldots with sequence numbers $1$, $2$, $3$,
\ldots, then it is meaningful to
identify the highest and lowest such sequence number in the global encoding
vector of any coded packet.
Let us refer to the highest and
lowest sequence numbers as: highest and lowest index of the coded
packet respectively.
For instance, a packet $y = P_3 + P_5 + P_7 + P_8$, the highest index
is $8$ and the lowest index is $3$.

Because all encoded packets have their own
highest index and lowest index, we can also compute the maximum of the
highest index of all not-yet decoded packets in a node, as well as the
minimum of the lowest index. We refer to the maximum and
the minimum as high index ($I_\mathrm{high}$) of a node and low
index ($I_\mathrm{low}$) of a node. Notice that a node will generally have
decoded the source packets from $1$ up to its low index.

The intent of the SEW coding rule is to use knowledge about the
state of neighbors of one node, namely their high and low index. A
node restricts the generated packets to a subset of the packets of
the source, until it is confirmed that perceived neighbors of the
node are able to decode nearly all of them, up to a margin $K$.
Notice that once all its neighbors may decode up to the first
$L-K$ packets, it is unnecessary for the node to include packets
$P_1, \ldots P_L$ in its generated combinations.

Hence, the general idea of SEW is that it restricts the mixed original packets
within an
encoded packet from a window of a fixed size $K$.
In other words, a node encodes
only source packets inside a fixed Encoding Window as:
$$\mathrm{i}^{th}\mathrm{coded~packet~at~node~}v:
 p_i^{(v)} = \sum_{j=k}^{j=k+K} a_{i,j} P_j\vspace{-2mm}$$
where the $(P_j)_{j = k, \ldots, k+K}$ are $K$ packets generated
from the source. The sequence of coefficients for $p_i^{(v)}$ is
the following global encoding vector:\\
$[ 0, 0, \ldots, a_{i,k}, a_{i,k+1}, \ldots, a_{i,k+K}, \ldots,0,0]$.
A node will repeat transmissions of new random combinations within
the same window, until its neighbors have progressed in the decoding process.

The intent of the SEW decoding rule, is to guarantee proper
functioning of the Gaussian elimination. An example of SEW
decoding rule is the following: assume that node $v$ has received
packets $y_1$ and $y_2$, for instance $y_1 = P_1 + P_9$ and $y_2 =
P_1 + P_2 + P_3$. Then $y_1$ would be used to eliminate $P_9$ for
newly incoming packets (the highest possible index is $9$), and
$y_2$ would be used to eliminate $P_3$ from further incoming
packets. On the contrary, if the SEW decoding rule was not applied
and if $y_1$ were used to eliminate $P_1$, then it would be used
to eliminate it in $y_2$, and would result into the computation of
$y_2 - y_1 = P_2 + P_3 - P_9$; this quantity now requires
elimination of $P_9$, an higher index than the initial one in
$y_2$. In contrast the SEW decoding rule guarantees the following
invariant: during the Gaussian elimination process, the highest
index of every currently non-decoded packet will always stay
identical or decrease.

Provided that neighbor state is properly exchanged and known,
as a result, the combination of the SEW coding rule and the SEW decoding rule,
guarantee that ultimately every node will be able to decode
the packets in the window starting from its lowest index; that is,
they guarantee early decoding.

Notice that improper knowledge of neighbor state might impact
the performance of the method but not
its correctness: if a previously unknown neighbor is detected
(for instance due to mobility), the receiving node will properly
adjust its sending window. Conversely, in \DCAST{}, obsolete
neighbor information, for instance about disappeared neighbors,
will ultimately expire.

\subsection{DRAGON: Rate Selection}
\label{sec:heuristics}

In this section, we describe rate selection algorithms
which complement the real-time decoding method SEW, in the framework
we previously proposed.

Precisely, we introduce our core heuristic for rate selection,
DRAGON. Before that, we describe a simplified rate selection, IRON,
which is used later in simulations for reference, and
that would approach the algebraic gossip method of
Deb et al.\cite{bib:KM03} in networks with high mobility.

These heuristics do not assume a specific type of
network topology; the only assumption is that one transmission
reaches several neighbors at the same time.

\subsubsection{Static Heuristic IRON}
\label{sec:IRON}

The reference heuristic, IRON, starts from the simple logic
of setting the same rate on every node:
for instance let us assume that the every node has an
identical rate as one, e.g. a packet per a second.

Now we further optimistically assume that
near-optimal energy-efficiency is achieved
and that
every transmission
would bring \emph{innovative} information to almost every
receiver, and we denote $M$ the average number of neighbors
of a node in the mobile network.

Then every node %
will receive on average
$M$ packets a second. Hence the source should inject at least $M$
packets per a second.
This constitutes the heuristic IRON:

\begin{compactitem}
\item IRON (Identical Rate for Other Nodes than source):
  every node retransmits with the same rate, except from the source
  which has a rate $M$ times higher.
\end{compactitem}

\subsubsection{Dynamic Heuristic DRAGON}
\label{sec:DRAGON}

The heuristic DRAGON has been proposed and analyzed in
\cite{bib:CA08} and \cite{bib:CA07}. We briefly summarize it in this
section for completeness.

The starting point of our heuristic DRAGON, is
that the observation that, for real-time decoding, the rank of
nodes inside the network should be close to the index of the last
source packet, and that in any case, they should at least evolve
in parallel.

Thus, one would expect the rank of a node to grow at the same pace
as the source transmission, as in the example of optimal rate
selections for static networks (see \refsec{sec:netcod-theory}).
Decreasing the rates of intermediate nodes by a too large factor,
would not permit the proper propagation of source packets in real
time. On the contrary, increasing excessively their rates, would
not increase the rate of the decoded packets (naturally bounded by
the source rate) while it would decrease energy-efficiency (by
increasing the amount of redundant transmissions).

The idea of the proposed rate selection is to find a balance
between these two inefficient states. As we have seen, ideally the
rank of a node would be comparable to the lastly sent source
packet.
Since we wish to have a
simple decentralized algorithm, instead of comparing with the
source, we compare indirectly the rank of a node with the rank of
all its perceived neighbors.

The key idea is to perform a control so that the rank of neighbor
nodes would tend to be equalized: if a node detects that one
neighbor had a rank which is too low in comparison with its own,
it would tend to increase its rate. Conversely, if all its
neighbors have greater ranks than itself, the node need not to
send packets in fact.

Precisely, let $D_v(\tau)$ denote the rank of a node $v$ at time
$\tau$, and let $g_v(\tau)$ denote the maximum gap of rank with
its neighbors, normalized by the number of neighbors, that is:
$$g_v(\tau) \triangleq \max_{u \in H_u}\frac{D_v(\tau) - D_u(\tau)}{|H_u|}$$

We propose the following rate selection, \dragon{}, \emph{Dynamic
Rate Adaptation from Gap with Other Nodes}, which adjusts the
rates dynamically, based on that gap of rank between one node and
its neighbors as follows:\\
\fbox{
\begin{minipage}{0.95\textwidth} %
\begin{compactitem}
\item \dragon{}: the rate of
node $v$ is set to $C_v(\tau)$ at time $\tau$ as: \\
$\bullet$ if $g_v(\tau) > 0$ then: $C_v(\tau) = \alpha g_v(\tau)$ \\
~~~where $\alpha$ is some constant \\
$\bullet$ Otherwise, the node stops sending encoded packets until
$g_v(\tau)$ becomes larger than $0$
\end{compactitem}
\end{minipage}}

Consider the total rate of the transmissions
that one node would receives from its neighbors:
the \emph{local received rate}.
In a static network with the previous rate selection:
\dragon{} ensures that every node will
receive a total rate at least equal to the average gap of one node
and its neighbors scaled by $\alpha$. That is, the local received
rate, at time $\tau$ verifies: \NotRR{\vspace{-2mm}}
$$\mathrm{Local~Received~Rate} \ge \alpha \left( \frac{1}{|H_v|}
   \sum_{u \in H_v} D_u(\tau)-D_v(\tau) \right)\vspace{-2mm}$$

This would ensure that the gap with the neighbors would be closed
in time $\approx \le \frac{1}{\alpha}$ if the neighbors did not
receive new innovative packets. Notice that this is independent
from the size of the gap: the greater the gap, the higher the
rate. 
Overall, the time for closing the
gap would be identical. This is only an informal argument to
describe the mechanisms of DRAGON; however experimental results in
\refsec{sec:simulations}, illustrate the proper behavior of the
algorithm, and its synergy with SEW.

\subsection{Termination Protocol}
\label{sec:termination}

A network coding protocol for broadcast requires a termination
protocol in order to decide when retransmissions of coded
packets should stop.

Our precise terminating condition is as follows: when a
node (a source or an intermediate node) itself and all its known
neighbors have sufficient data to recover all source packets, the
transmission stops. This stop condition requires information about
the status of neighbors including their ranks. Hence, each node
manages a local information base to store one hop neighbor
information, including their ranks.

\begin{algorithm}[hbt!]

\SetLine \label{alg:nb_table_manage}

\caption{Brief Description of Local Information Base Management
Algorithm}

{\bf Nodes' local info notify scheduling:}  The nodes start
notifying their neighbors of their current rank and their lifetime
when they start transmitting vectors. The notification can
generally be piggybacked in data packets if the nodes transmit a
vector within the lifetime interval. \mylabel{NCAL4:10}

{\bf Nodes' local info update scheduling:} On receiving
notification of rank and lifetime,  the receivers create or update
their local information base by storing the sender's rank and
lifetime. If the lifetime of the node information in the local
information base expires, the information is removed.
\mylabel{NCAL4:20}
\end{algorithm}

In order to keep up-to-date information about neighbors, every
entry in the local information base has lifetime. If a node does
not receive notification for update until the lifetime of an entry
is expired, the entry is removed. Hence, every node needs to
provide an update to its neighbors. In order to provide the
update, each node notifies its current rank with new lifetime. The
notification is usually piggybacked in an encoded data packet, but
could be delivered in a control packet if a node does not have
data to send during its lifetime. A precise algorithm to organize
the local information base is described in
algorithm~\ref{alg:nb_table_manage}

The notification of rank has two functions:
it acts both as a positive acknowledgement (ACK) and
as a negative acknowledgement (NACK).
When a node has sufficient data to recover all source packets, the
notification works as ACK, and when a node needs more data to
recover all source packets, the notification has the function of an NACK.
In this last case, a receiver of the NACK could have
already stopped transmission, and thus
detects and acquires
a new neighbor that needs more data to recover all source
packets.
In this case, the receiver restarts transmission. The
restarted transmission continues until the new neighbor notifies
that it has enough data, or until the entry of the new neighbor is
expired and therefore removed.

\subsection{Proof of convergence of DRAGONCAST}
\label{sec:proof-SEW}

In this section, we prove that when the source has a finite number of
packets, and when the network is connected, the algorithm SEW  will
always ensure that every node may decode the packets (in association
with the rest of the protocol DRAGONCAST).
Note that we do not address performance issues.

Our first step towards the proof is a formal definition of the
assumption ``network connected'':
\begin{quote}
\emph{Connectivity Definition}:\\
If a network is \emph{connected}, then 
for any pair of nodes $u$ et $v$, one may
find a sequence of nodes $(u_0 = u, u_1, u_2, \ldots, u_{k-1}, u_k = v)$,
with the following properties (for any $i=0,\ldots k-1$)
\begin{itemize}
\item $u_i$ may send packets to $u_{i+1}$ 
      with a rate greater or equal to than some constant $C$
  and with average loss probability lower or equal to some
  constant $p_\mathrm{max-loss}$
\item and $u_{i+1}$ may send packets to $u_i$ with the same properties
\end{itemize}
\end{quote}
This definition is more complex that a graph-theory definition, because
it may be applied to mobile networks (even delay-tolerant networks),
networks with limited capacity, or with lossy transmissions, \ldots.

Our second important step, is to remark one property of DRAGON,
described in \refsec{sec:DRAGON}:
\begin{quote}
\emph{Neighbor Transmission Assumption}:\\
If a node detects that one neighbor node has a lower rank, it
will send coded packets with a rate greater than some minimal rate
(actually at least some constant $\alpha$, see \refsec{sec:DRAGON})
\end{quote}

The third step is to note that the following property can be ensured
in DRAGONCAST, in the termination protocol
(see \refsec{sec:termination})
\begin{quote}
\emph{Advertisement Assumption:}\\
Every node, that cannot yet decode, will advertise once its state 
at least with a rate greater than some constant $C_\mathrm{min-adv}$
\end{quote}
A technical detail is the expiration time for keeping neighbor state
information: in the remaining we simply
will assume that it is sufficiently large.

With the previous assumptions, we can now prove the following result:
\begin{theorem}
Ultimately, every node will be able to decode (almost always, in
the probabilistic sense).
\end{theorem}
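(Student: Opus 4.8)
The plan is to show that, almost surely, the rank $D_v(\tau)$ of every node $v\in\VV$ eventually reaches the generation size $D$, which by \refsec{sec:netcod-decoding} is exactly the condition for $v$ to decode all source packets (SEW additionally makes this progressive, but the statement only needs eventual full decodability). First I would record the monotonicity structure: a received vector is stored only when it is innovative (step~\ref{NCAL3:30}) and Gaussian elimination preserves rank, so $\tau\mapsto D_v(\tau)$ is nondecreasing and bounded above by $D$; hence $R_v\triangleq\lim_{\tau\to\infty}D_v(\tau)$ exists, each node attains its limiting rank after some finite (random) time, and --- $\VV$ being finite --- there is a finite time $\tau^\star$ after which $D_v(\cdot)\equiv R_v$ for every $v$. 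The source has $R_s=D$. I would then argue by contradiction, assuming the set $T\triangleq\{v\in\VV:R_v<D\}$ of forever-deficient nodes is nonempty.

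The core is a frontier argument along a connectivity path. Given $w\in T$, the \emph{Connectivity Definition} supplies a path from the source (not in $T$) to $w$ (in $T$); it must cross the frontier, yielding an edge $u\leftrightarrow w'$ with $u\notin T$, $w'\in T$ and both directed links of rate $\ge C$ and loss probability $\le p_\mathrm{max-loss}<1$. For $\tau\ge\tau^\star$ one has $D_u(\tau)=D>R_{w'}=D_{w'}(\tau)$. Since $w'$ cannot decode, the \emph{Advertisement Assumption} makes it keep advertising its rank at rate $\ge C_\mathrm{min-adv}$; the link $w'\to u$ being good and the neighbor-state expiration time being large enough, infinitely many advertisements reach $u$ (each loss has probability $\le p_\mathrm{max-loss}$, so Borel--Cantelli gives infinitely many successes a.s.), so from some finite time on $u$ permanently knows of a neighbor of strictly smaller rank. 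By the \emph{Neighbor Transmission Assumption} (DRAGON), $u$ then keeps sending coded packets at rate $\ge\alpha$. The final step is to show each such packet is innovative to that neighbor with probability bounded away from $0$: by the SEW coding rule it is a uniformly random combination over $u$'s current encoding window, and by the window-management rule the window's lower edge tracks the minimum low index among $u$'s deficient neighbors, so it contains a source packet not yet in that neighbor's subspace --- whence a miss has probability $\le 1/q<1$. Combining, each emission is received-and-innovative with probability $\ge(1-p_\mathrm{max-loss})(1-1/q)>0$; infinitely many emissions then force (Borel--Cantelli again) the neighbor's rank above $R_{w'}$ at some $\tau>\tau^\star$, contradicting the choice of $\tau^\star$. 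Hence $T=\emptyset$ almost surely and every node decodes.

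I expect the last step --- turning ``$u$ has a lower-rank, still-advertising neighbor'' into ``$u$ emits packets innovative to it'' --- to be the main obstacle, since this is precisely where one must invoke the SEW window-management invariant (a node's window does not advance past packets its perceived neighbors cannot yet decode, up to the margin $K$); otherwise $u$ could in principle confine its combinations to a part of the stream the straggler already knows. To make this airtight I would also track the low indices $I_\mathrm{low}$, note they stabilize (nondecreasing, dominated by the rank), and check that the smallest low index among $u$'s neighbors is $<D$ and lies within width $K$ of a packet outside the corresponding neighbor's span --- and I would double-check the independence bookkeeping needed for the two Borel--Cantelli applications (successive transmission/loss outcomes on a link, and the random coefficient draws, can be taken independent).
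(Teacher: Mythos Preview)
Your plan is correct and shares the paper's architecture: freeze the state after no more innovative receptions occur, locate a frontier edge on a connectivity path, invoke the Advertisement and Neighbor-Transmission assumptions to guarantee a stream of coded packets across that edge, and use the SEW window rule to argue one of them must be innovative --- a contradiction. The one real difference is the frontier criterion. You cut on \emph{rank} (a full-rank node $u$ adjacent to a deficient $w'$); the paper instead cuts on the \emph{low index}: it picks a non-decoding node with globally minimal low index $I_\mathrm{lowest}$, walks the path from the source toward it, and takes $u_i$ to be the first node on that path whose low index equals $I_\mathrm{lowest}$, so that the predecessor $u_{i-1}$ (which need not have full rank) has strictly larger low index and hence has already decoded the source packet $P_{I_\mathrm{lowest}}$. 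That choice makes the SEW step --- exactly the one you flag as the obstacle --- fall out directly: SEW forces $u_{i-1}$'s window down to start at $I_\mathrm{lowest}$, $u_{i-1}$ holds the uncoded $P_{I_\mathrm{lowest}}$ among the $Q_j$ it combines, and random linear coding over that fixed finite set lets $u_i$ eventually recover $P_{I_\mathrm{lowest}}$, which is innovative since $u_i$'s low index is precisely $I_\mathrm{lowest}$. In your variant the deficient neighbor whose low index pins $u$'s window need not be $w'$ itself, so the innovation you exhibit may land at a different deficient neighbor of $u$; this still contradicts the freeze at $\tau^\star$, but you should state the contradiction for ``some node in $T$'' rather than for $w'$ specifically. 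Your Borel--Cantelli bookkeeping is equivalent to, and more explicit than, the paper's exponential-tail phrasing.
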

\begin{proof}
Note that for clarity, the proof that follows is written informally, but
a more formal version could be derived, as every detail is addressed.

Consider a source with a finite number of packets.

We will do a proof by contradiction. Assume that DRAGONCAST
is run for an arbitrary large time on the network.
Consider the point in time, where nodes receive no new innovative packets.
Because the number of source packets is finite\footnote{and this number of source packet bounds the rank of one node, which is always increased when receiving an innovative packet}, such a time always exists.

Imagine that at that point of time,
there exists at least one node that would not be able to 
decode in the
network, and among such nodes,
take the node with the smallest low index $I_\mathrm{low}$ 
denoted $I_\mathrm{lowest}$. The node associated with this index
is denoted $v_\mathrm{lowest}$.

Consider the source $s$: by the connectivity definition,
 there exists a path from the source to this node 
$(u_0 = s, \ldots, u_k = v_\mathrm{lowest})$
satisfying the condition in the connectivity definition. 

Along this path, we will consider $u_i$, the node with the minimum $i$,
such that its low index is $I_\mathrm{lowest}$ (i.e. along this path,
the closest node  to the source with low index $I_\mathrm{lowest}$)

As long as the node $u_i$ cannot decode,
as the advertisement assumption indicates, the node will retransmit
its state (piggybacked or not) at a guaranteed minimal rate. 
With the assumptions
of the connectivity definition such messages might actually be sent 
only with a lower rate ($C$ might be lower than $C_\mathrm{min-adv}$),
and will be received with probability greater than $1 - p_\mathrm{max-loss}$.
The global result is that as time $\tau$ converge to infinity,
the probability that the node $u_{i-1}$ receives the state message from
$u_{i}$ increases exponentially as $1-e^{-\beta \tau}$
for some constant $\beta>0$. By a large selecting $\tau$ properly, we have
have an arbritrarily low probability $p_\epsilon$ that a state message from any
node is not received by its neighbor after a time $\tau$.

Once the state message from $u_{i}$ is received by $u_{i-1}$, by using
the neighbor transmission assumption, we know that $u_{i-1}$ will retransmit
packets at least with a certain frequency, and using the same reasoning as
previously, after a time $\tau'$, $u_{i-1}$ will receive such a packet,
with a probability greater than $1-p_\epsilon$.

The outcome is that as long as $u_{i}$ cannot decode, it will
receive a coded packet from $u_{i-1}$ with probability greater than
$(1-p_\epsilon)^2$ after a time $\tau + \tau'$

Now consider the content of the packet: it is a set of coded packets.
Since the low index $I_\mathrm{lowest}$ must be lower than
the low index of $u_{i-1}$, the node $u_{i-1}$ may at least send
the $I_\mathrm{lowest}$-th packet from the source as uncoded packet, or 
in general a linear combination of some of the sources packets
with indices between $I_\mathrm{lowest}$ and $I_\mathrm{lowest}+K$.

In fact, as long as $u_{i}$ cannot decode 
the $I_\mathrm{lowest}$-th packet from the source, $u_{i-1}$ will
send such coded packets with probability $(1-p_\epsilon)^2$, 
in every $\tau + \tau'$ time intervals.

Denote $Q_0$ the $I_\mathrm{lowest}$-th packet from the source,
and $Q_1$ to $Q_m$ the other packets in the buffer of $u_{i-1}$
with indices between $I_\mathrm{lowest}$ and $I_\mathrm{lowest}+K$.
The point being that $u_{i-1}$ must have decoded $Q_0$, but
not necessarily the other $Q_1$, \ldots, $Q_m$ that are linear combination
of source packets.

In any case, the key is that we have transmissions
of linear combinations of $Q_0$, $Q_1$, \ldots $Q_m$ by $u_{i-1}$ to $u_i$,
with a lower-bounded rate.
We can use the classical random linear coding results, to deduce that
the probability of not being to decode the $(Q_i)$ after
several transmissions decreases exponentially 
(or faster than that) with the number 
received linear combinations. Hence ultimately, the node $u_i$
will be able to decode the packets $Q_i$, including specifically $Q_0$, 
which is a new source packet for it. 
This contradicts the hypothesis that no new innovative packets is received.

Hence this proves the fact that ultimately
nodes can always decode (almost always, since
we depend on events with probability 1).

\end{proof}

\section{Evaluation Metrics for Experimental Results}
\label{sec:evaluation_metrics}

To evaluate the performance of our broadcasting protocol DRAGONCAST,
we are interested in two aspects: first, the energy-efficiency of the
method, and second, a quantitative assessment of the ability
to perform real-time decoding with SEW.

To do so, we provide two metrics, one for each aspect:
to evaluate efficiency, we measure a quantity denoted
$E_\mathrm{ref-eff}$ and whereas to evaluate
real-time decoding, we measure a
quantity denoted
provide $RDT$; they are defined as follows:
\begin{compactitem}
\item $E_\mathrm{ref-eff} = \frac{E_\mathrm{bound}}{E_\mathrm{cost}}$:
the ratio between
$E_\mathrm{cost}$ and $E_\mathrm{bound}$, where $E_\mathrm{cost}$ is a total
number of transmissions to broadcast one data packet
to the entire network and $E_\mathrm{bound}$ is
one lower bound of the possible value of $E_\mathrm{cost}$.
\item $RTD$: the average
real-time decoding rate per unit time; the ratio between the
number of decoded packet of a node and the rank of the node.

\end{compactitem}

They are further described in the following sections.

\subsection{Metric for Energy-efficiency}

The metric for efficiency, $E_\mathrm{ref-eff}$ is always smaller than $1$ and
may approach $1$ only when the protocol becomes close to optimality
(the opposite is false).

As indicated previously, $E_\mathrm{cost}$, the quantity appearing
in the expression of $E_\mathrm{ref-eff}$ is the average number of
packet transmissions per one broadcast of a source packet.
We compute directly
$E_\mathrm{cost}$ as
$$\Ecost \triangleq \frac{\mathrm{Total~number~of~transmitted~packets}}%
{\mathrm{Number~of~source~packets}}$$.

The numerator of $E_\mathrm{ref-eff}$,
$E_\mathrm{bound}$ is a lower bound of the number of transmissions
to broadcast one unit data to all $N$ nodes, and we compute it as
$\frac{N}{M_\mathrm{avg-max}}$ where $M_\mathrm{avg-max}$ is an average of
the maximum number of neighbors. %
The value of
$E_\mathrm{bound}$ comes from assumption that a node has $M_\mathrm{avg-max}$
neighbors at most and one transmission can provide new and useful
information to $M_\mathrm{max}$ nodes at most. Notice the maximum number
of neighbors ($M_\mathrm{max}$) evolves in a mobile network, and hence
we compute the average of $M_\mathrm{max}$
as $M_\mathrm{avg-max}$ for the whole
broadcast session after measuring $M_\mathrm{max}$ at periodic intervals.

\subsection{Energy-efficiency reference point for routing}

In our simulations, the performance of DRAGONCAST was not compared
to the performance of methods using routing. Indeed, many routing
methods (such as connected dominating sets), would suffer from
changes of topology due to the mobility, and would need to be
specially tuned or adaptable.

In order to still obtain a reference point for routing, we
are using the upper bound of efficiency
without coding \\
($E_\mathrm{bound-ref-eff}$) of
Fragouli et al. \cite{Bib:FWB06}.
Their argument works as follows:
consider the broadcasting of one packet to an entire
network and consider one node in the network which retransmits the packet.
To further propagate the packet to network, another connected
neighbor must receive the forwarded packet and retransmit it, as seen
in \reffig{fig:bound_withoutcodingig}. Considering the
inefficiency due to the fact that any node inside the shared area receives
duplicated packets, an geometric upper bound of for routing
can be deduced:
\mymath{\Erel^\mathrm{(no-coding)} \ge \frac{6 \pi}{2 \pi + 3
\sqrt{3}}}. Notice that $\frac{6 \pi}{2 \pi + 3 \sqrt{3}} \approx
1.6420\ldots > 1$

\begin{figure}[!htb]
\vspace{-2mm}
\centering%
\resizebox{4.0cm}{!}{\includegraphics*{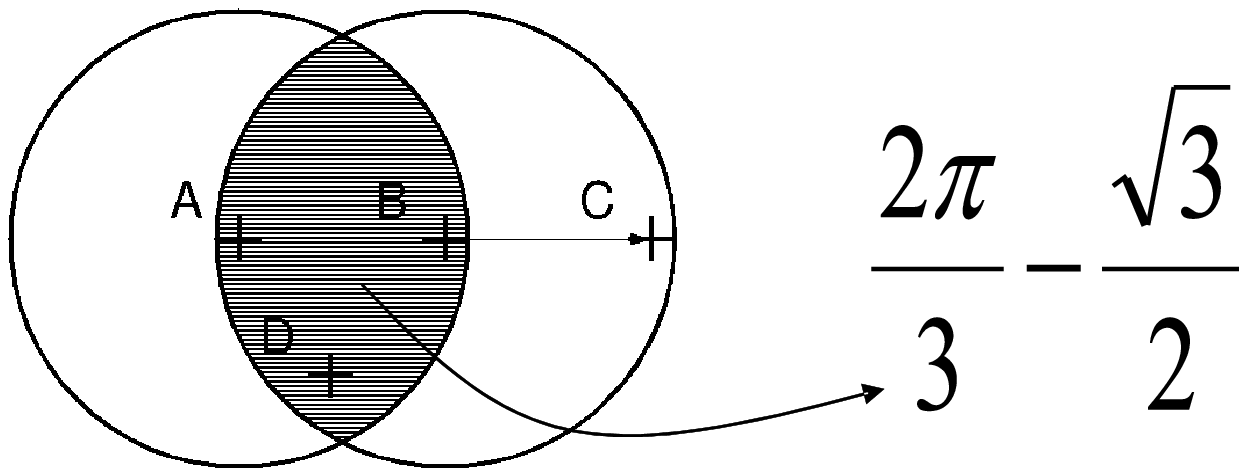}}%
\vspace{-4mm} \caption{$E_\mathrm{bound-ref-eff}$ without coding}%
\label{fig:bound_withoutcodingig}%
\vspace{-3mm}
\end{figure}%

\subsection{Real-Time Decoding}

For a real-time decoding metric, we measure an average real-time
decoding rate ($RTD$). We compute it as a ratio between the number
of decoded packets inside a node and the number of received useful
(innovative) packets of the node per unit time. As explained in
\refsec{sec:netcod-practical}, the number of these useful packets
is the rank of a node. Thus we compute $RTD$ of all nodes
precisely as
$$RTD \triangleq
\frac{\mathrm{Total~number~of~decoded~packets~at~a~node}}{\mathrm{Rank~of~the~node}}$$
(and perform averages).

\section{Experimental Results}
\label{sec:simulations}

In order to evaluate the protocol DRAGONCAST, we performed several
sets of simulations using the NS-2 simulator. The simulation parameters
are given in Table~\ref{tab:param-sew-ns}.

\begin{table}[htb!]
\centering
\begin{tabular}{|c|c|} \hline
Parameter & Value(s) \\ \hline Number of nodes & 200 \\ \hline
transmission range & 250m  \\ \hline network field size & 1100m x 1100m \\
\hline antenna & omni-antenna \\
\hline propagation model & two way ground \\
\hline MAC & 802.11 \\
\hline Data Packet Size  & 512 including headers \\
\hline Generation size & 1000 \\
\hline Field $\mathbb{F}_p$, (xor)  & $p = 2$ \\
\hline
\end{tabular}
\caption{simulation parameters of NS-2} \label{tab:param-sew-ns}
\end{table}

Simulations were made with different scenarios and for the metrics
described in \refsec{sec:evaluation_metrics}. First we assess the
quality of a real-time decoding rate with our method SEW in
\refsec{sec:sim-sec}. Because real-time decoding sacrifices some
energy-efficiency, we analyzed the impact of the introduction SEW
on efficiency, and then the whole protocol DRAGONCAST
in \refsec{sec:sim-efficiency}.

\subsection{Real-Time Decoding: Effects of SEW}
\label{sec:sim-sec}

In order to evaluate the effects of our real-time decoding method
SEW, simulations were run with parameters in
Table~\ref{tab:param-sew-ns} and the following additional
parameters: SEW window size $K=100$,
high mobility ($2.7$ radio range/sec),
and a source rate $M=8.867$.

We used both rate selection heuristics IRON and
DRAGON,
and drew the evolution of the following parameters with time:
\begin{compactitem}
\item the average rank of nodes,
\item average $I_\mathrm{high}$
\item minimum $I_\mathrm{high}$,
\item source rank
\item average $RTD$
\end{compactitem}
The results are represented on \reffig{fig:decoding_iron_k1000}
with rate selection IRON, whereas \reffig{fig:decoding_da_k1000}
shows results using DRAGON, and
\reffig{fig:decoding_iron_k100} shows results using IRON and SEW.
The results for DRAGONCAST (DRAGON + SEW) are given in the next section.

\newcommand{\artfigwidth}{.30\textwidth}
\newcommand{\rrfigwidth}{.47\textwidth}

\begin{figure*}[htp!]
\vspace{-5mm} \centering \subfigure[][IRON, without SEW]{
\includegraphics[width=\rrfigwidth]{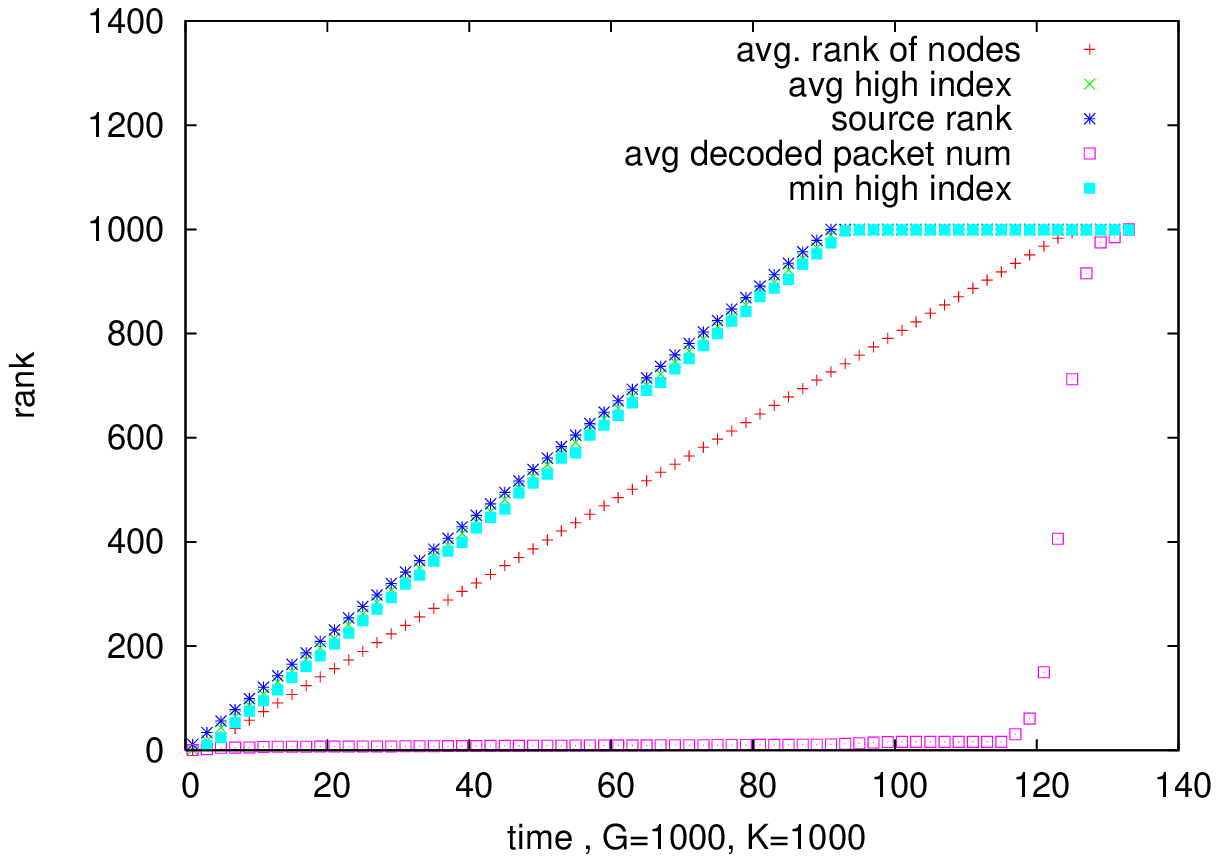}
\label{fig:decoding_iron_k1000}
}%
\hspace{0in}%
\subfigure[][IRON, with SEW]{
\includegraphics[width=\rrfigwidth]{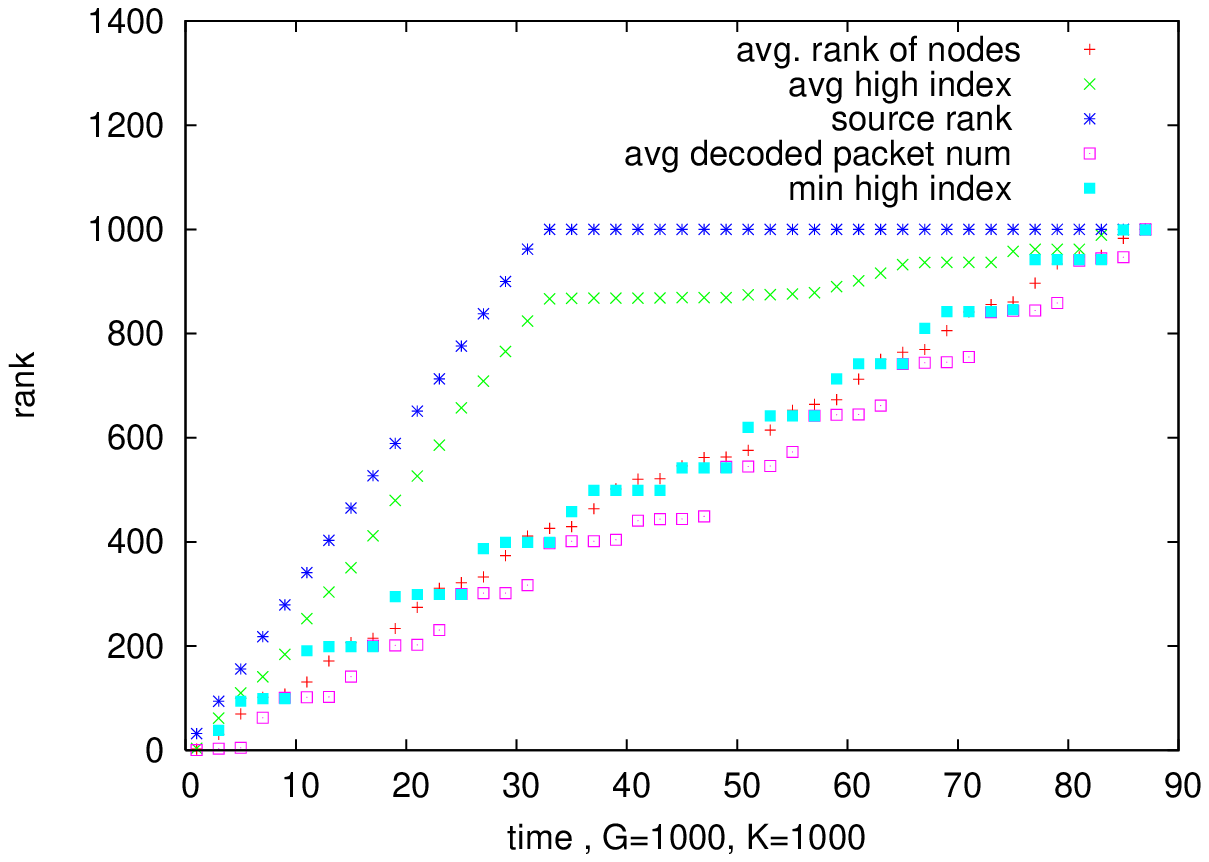}
\label{fig:decoding_iron_k100}
}%
\hspace{0in}%
\subfigure[][DRAGON, without SEW]{
\includegraphics[width=\rrfigwidth]{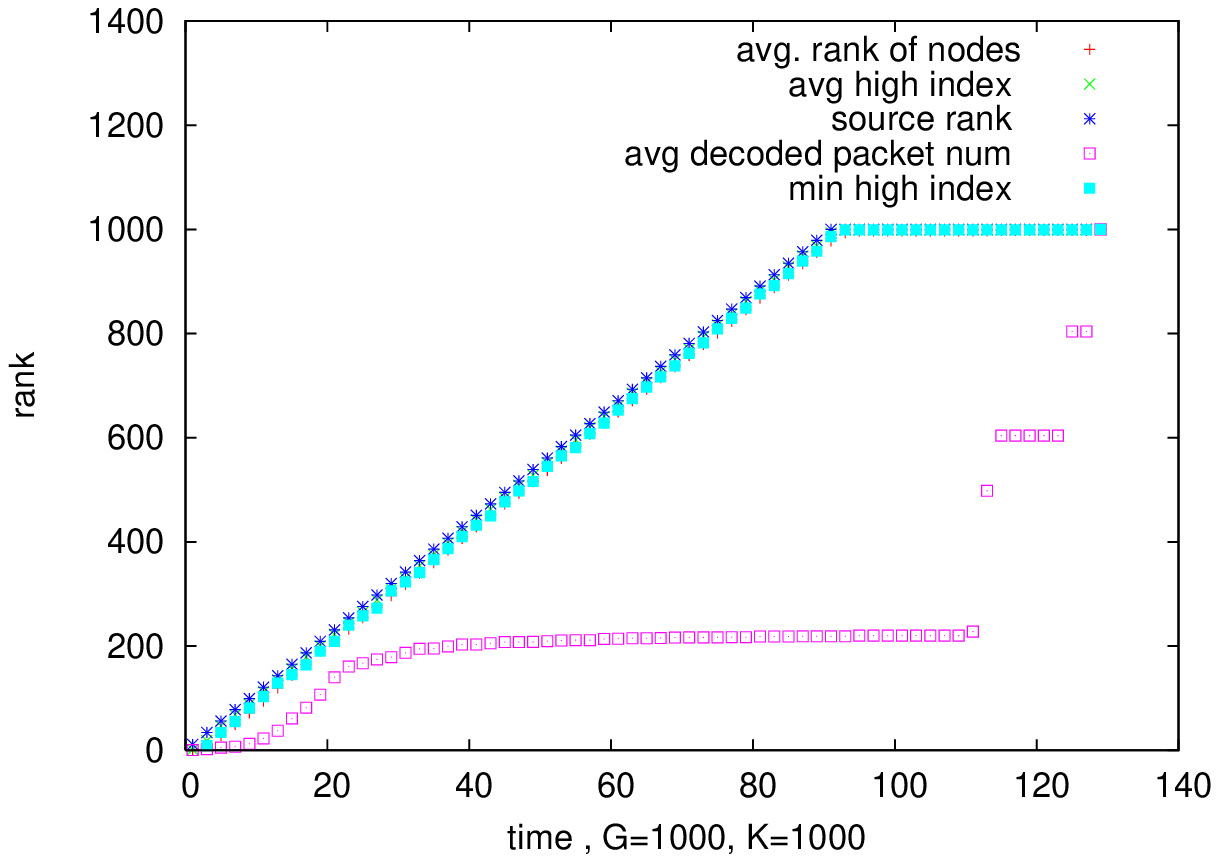}
\label{fig:decoding_da_k1000}
}%
\vspace{-1mm} \caption{Evolution of avg. $D$, avg.($I_\mathrm{high}$),
min.($I_\mathrm{high}$) and source rank, with high mobility, N=200
M=8.867} \label{fig:decoding_process}\vspace{-1mm}
\end{figure*}

As seen in \reffig{fig:decoding_process}, SEW could decrease the
gap between the average number of decoded packets and average rank
of nodes. Hence this evidences the success of real-time decoding:
indeed, on that example, and thanks to this small gap, a node
could decode
more then $80$ percent of received packets, (the results for \DCAST{}
are comparable and are not reported here, but the next section
evidence that even in the case with less mobility \DCAST{} also
achieves $80$ \% real-time decoding).

On the contrary, the results without
SEW show that a node can decode only a fraction of of its
received coded packets for most of the simulation's duration
(in the example, about $5$ \% for IRON, and $20$ \% for DRAGON),
and will then decode most of the coded packets suddenly, at the end of
the simulation. Such behavior is not uncommon: indeed the
difference between being able to decode or not a whole set of packets
may be made by one single additional packet.

In this spirit,
we can explain the different decoding success rates by
comparing the evolution of $I_\mathrm{high}$ and of the average rank
of nodes. In the simulation without using SEW, the high index of
a node $I_\mathrm{high}$ stays higher than the rank of the node
and hence the node will not get
a chance to perform real-time decoding: at the same time as the node
gets more useful coded packets for the decoding process,
it gets also get additional coefficients to eliminate.

On the contrary, in the simulation using SEW, the average high index
$I_\mathrm{high}$ increases more slowly than the rank of the source
 and at the similar pace with the average
rank of nodes, as seen in Figure ~\ref{fig:decoding_iron_k100}.
This keeps $I_\mathrm{high}$ close to the rank.
Therefore, in these simulations, nodes are able
to decode more than $80$ percent of received packets during almost all
simulation time.

Results only using DRAGON also show that DRAGON enables real-time
decoding from time to time without using SEW as seen in
Figure~\ref{fig:decoding_da_k1000}.
Figure~\ref{fig:decoding_da_k1000} shows that average rank of all
nodes and average $I_\mathrm{high}$ of all nodes increases
similarly. They increase at the similar pace but there steadily
exists a small gap between them. Hence, $I_\mathrm{high}$ of a
node does not meet a rank of a node, and RTD of only using DRAGON
is overall low as $0.2$ almost until the end of the simulation.
Hence, even though DRAGON is performing better than IRON on this
example, the results show real-time decoding cannot be expected with
DRAGON alone: hence the full DRAGONCAST protocol (DRAGON+SEW) is
necessary.

\subsection{Efficiency and Read-Time Decoding}
\label{sec:sim-efficiency}

Our method SEW enables real-time decoding, but
the real-time decoding rate is naturally related with a SEW window
size $K$. As the SEW window size gets smaller, the real-time decoding rate
increases. However, on the other hand, a too small SEW window size
will decrease innovative packet rates, because it will force some
nodes to retransmit packets from the same subset more often until
some neighbors reach their own rank. These retransmissions from the
same subset are more likely to be redundant to up-to-date
neighbors and this may result in efficiency decrease.

\begin{figure*}[htp!]
\vspace{-5mm} \centering \subfigure[][$E_{rel-eff}$ of IRON ]{
\includegraphics[width=\rrfigwidth]{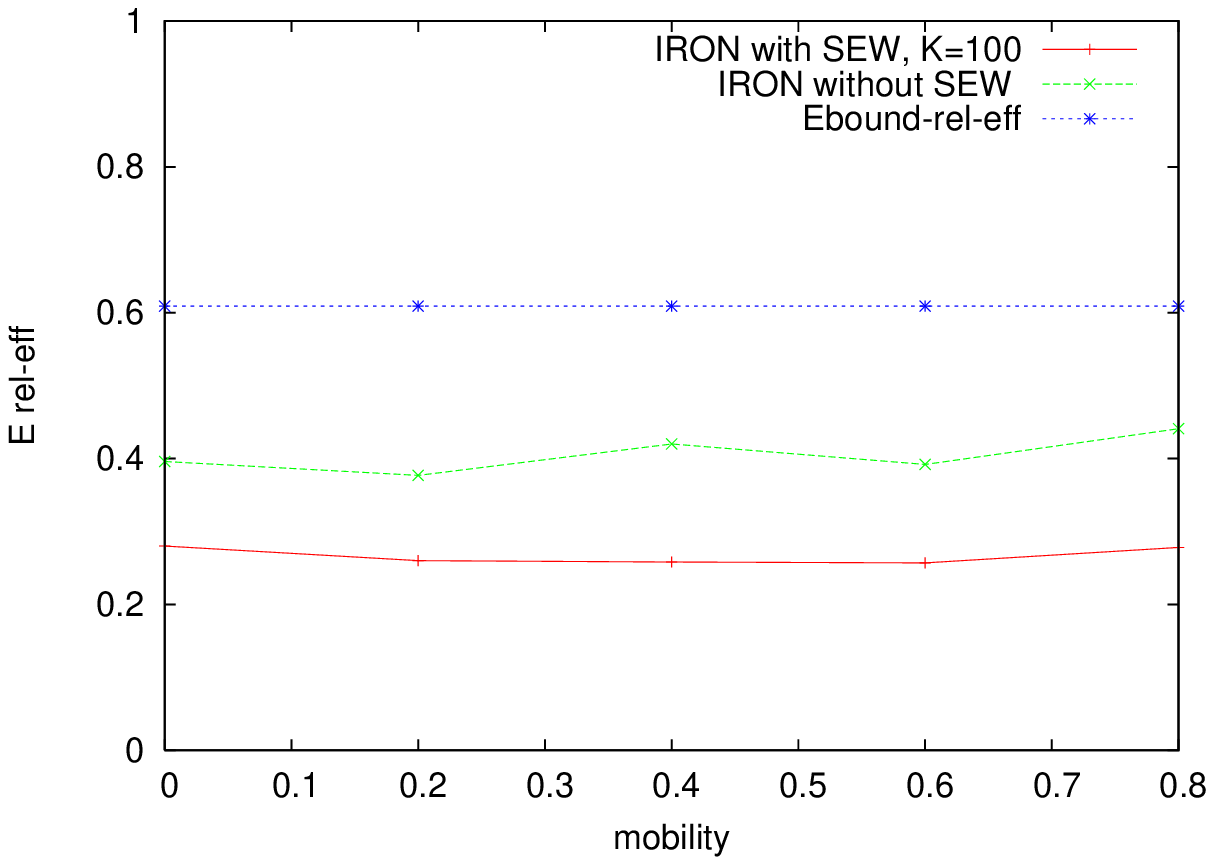}
\label{fig:Ecost_iron_mobility}
}%
\hspace{0in}%
\subfigure[][$E_{rel-eff}$ of Dragon $\alpha=0.5$] {
\includegraphics[width=\rrfigwidth]{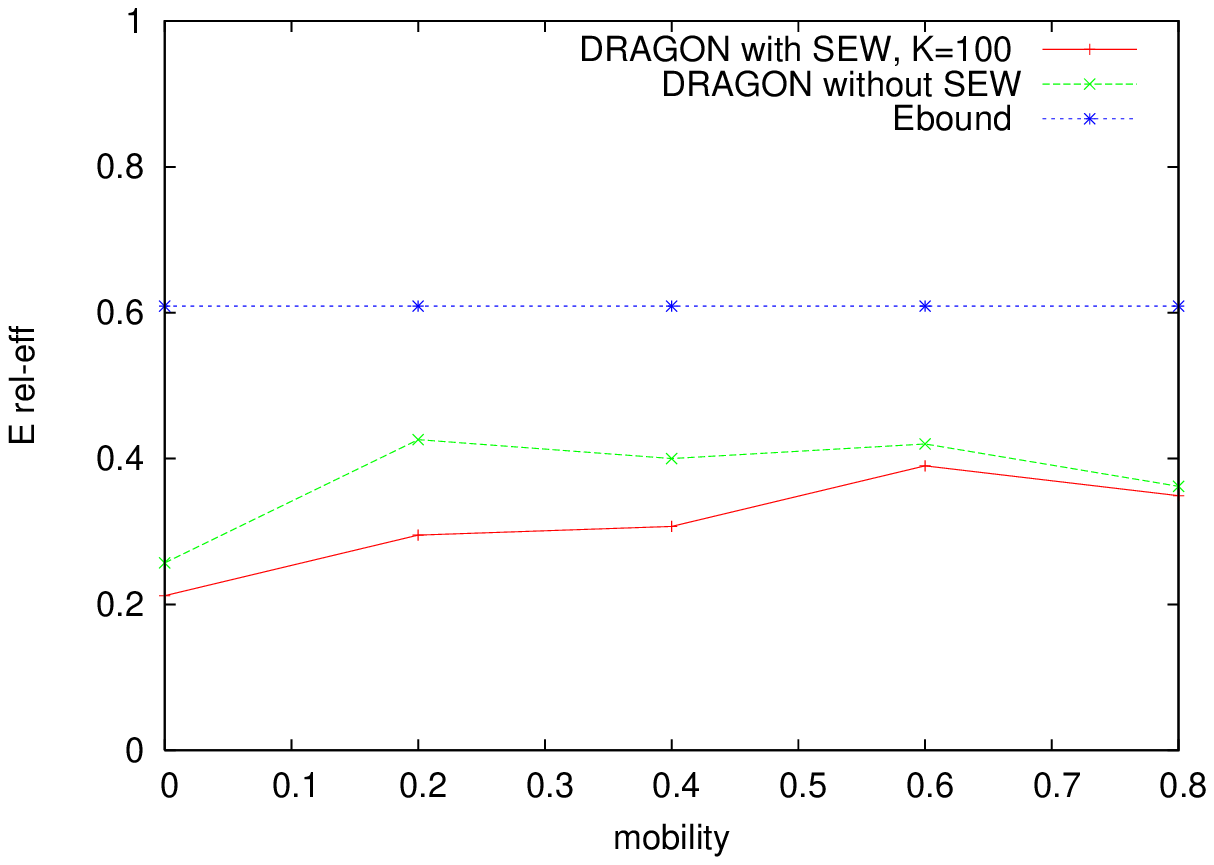}
\label{fig:Ecost_dragon_mobility}
}%
\hspace{0in}%
\subfigure[][$RTD$ of IRON  ]{
\includegraphics[width=\rrfigwidth]{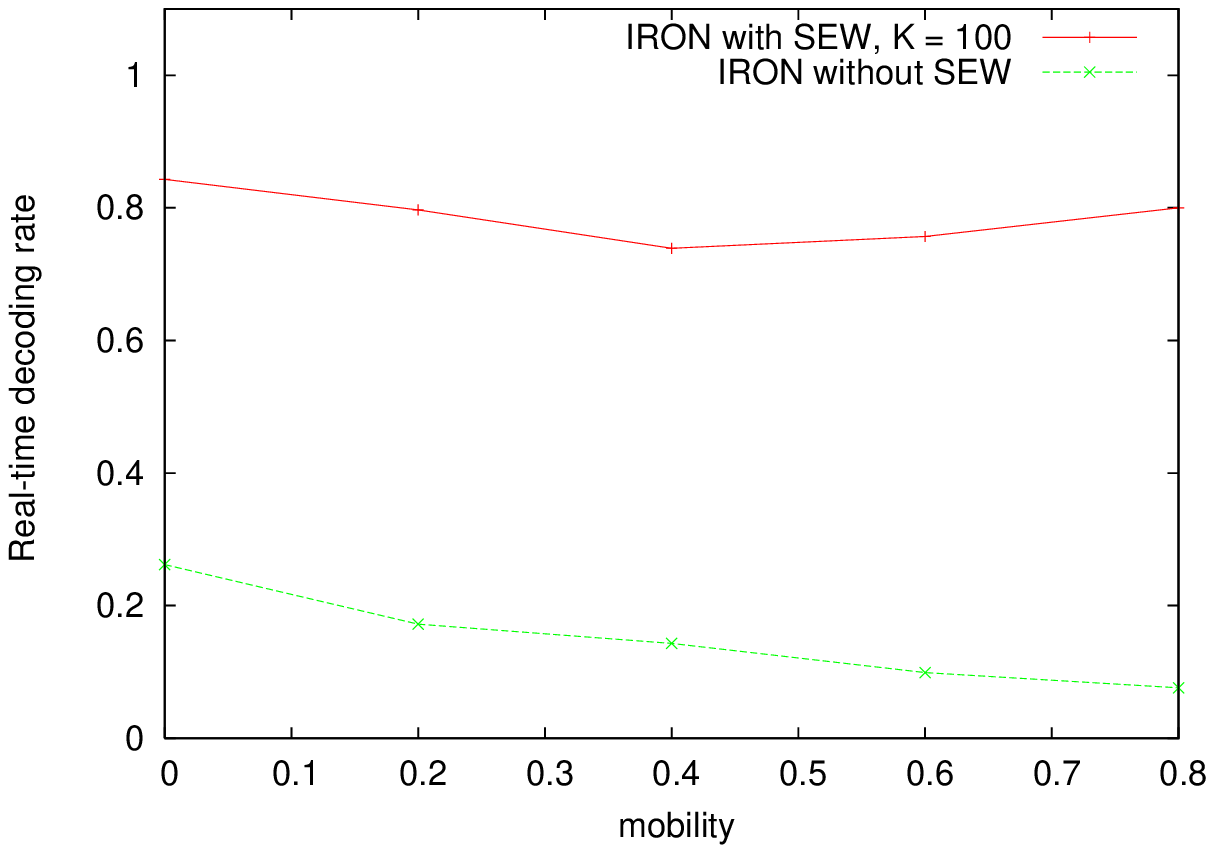}
\label{fig:Decode_iron_mobility}
}%
\hspace{0in}%
\subfigure[][$RTD$ of Dragon $\alpha=0.5$ ]{
\includegraphics[width=\rrfigwidth]{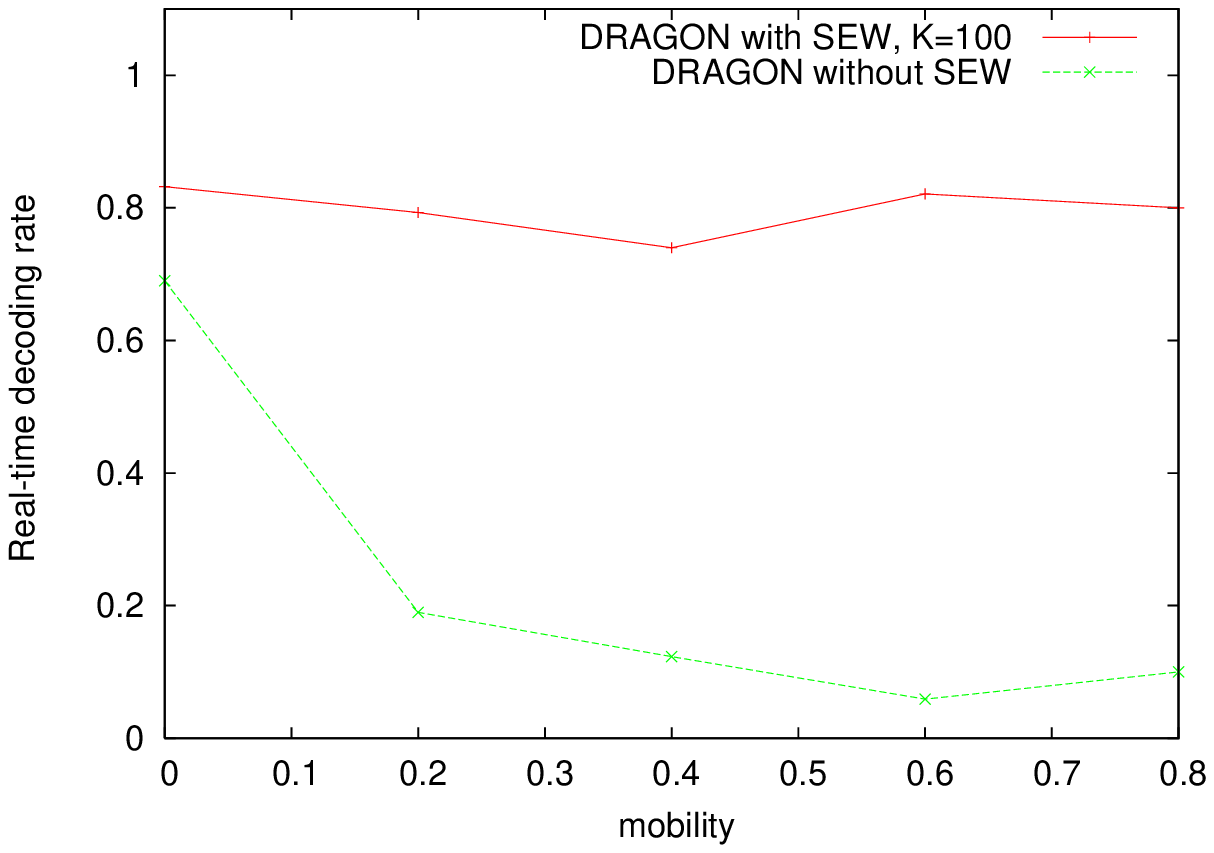}
\label{fig:Decode_dragon_mobiliy}
}%
\vspace{-1mm}\caption{ $E_{rel-eff}$ and decoding rate with
changing mobility, N=200 M=8.867} \label{fig:mobility_simresult}
\vspace{-1mm}
\end{figure*}

\begin{figure*}[htp!]
\vspace{-5mm} \centering \subfigure[][ $E_{rel-eff}$ of IRON ]{
\includegraphics[width=\rrfigwidth]{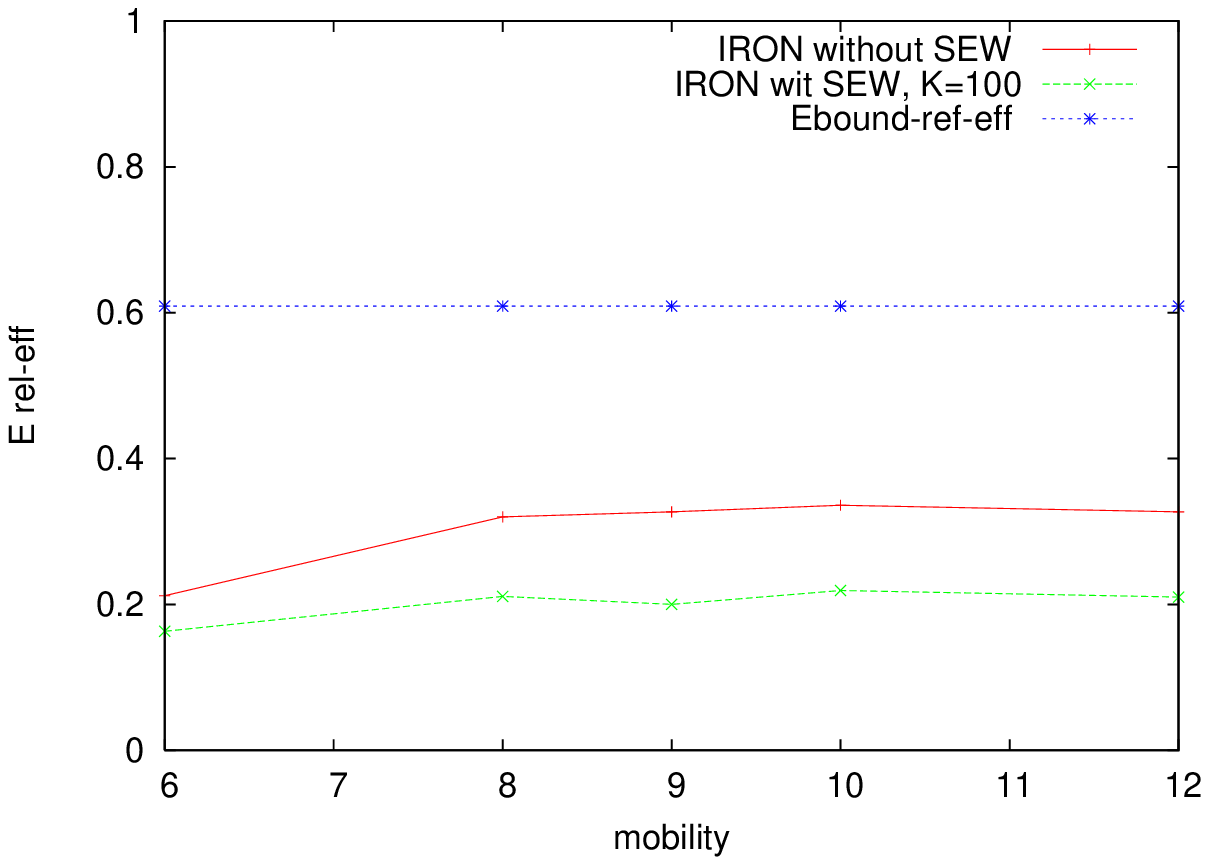}
\label{fig:slow_decoding_iron_k100}
}%
\hspace{0in}%
\subfigure[][$E_{rel-eff}$ of Dragon $\alpha=0.2$] {
\includegraphics[width=\rrfigwidth]{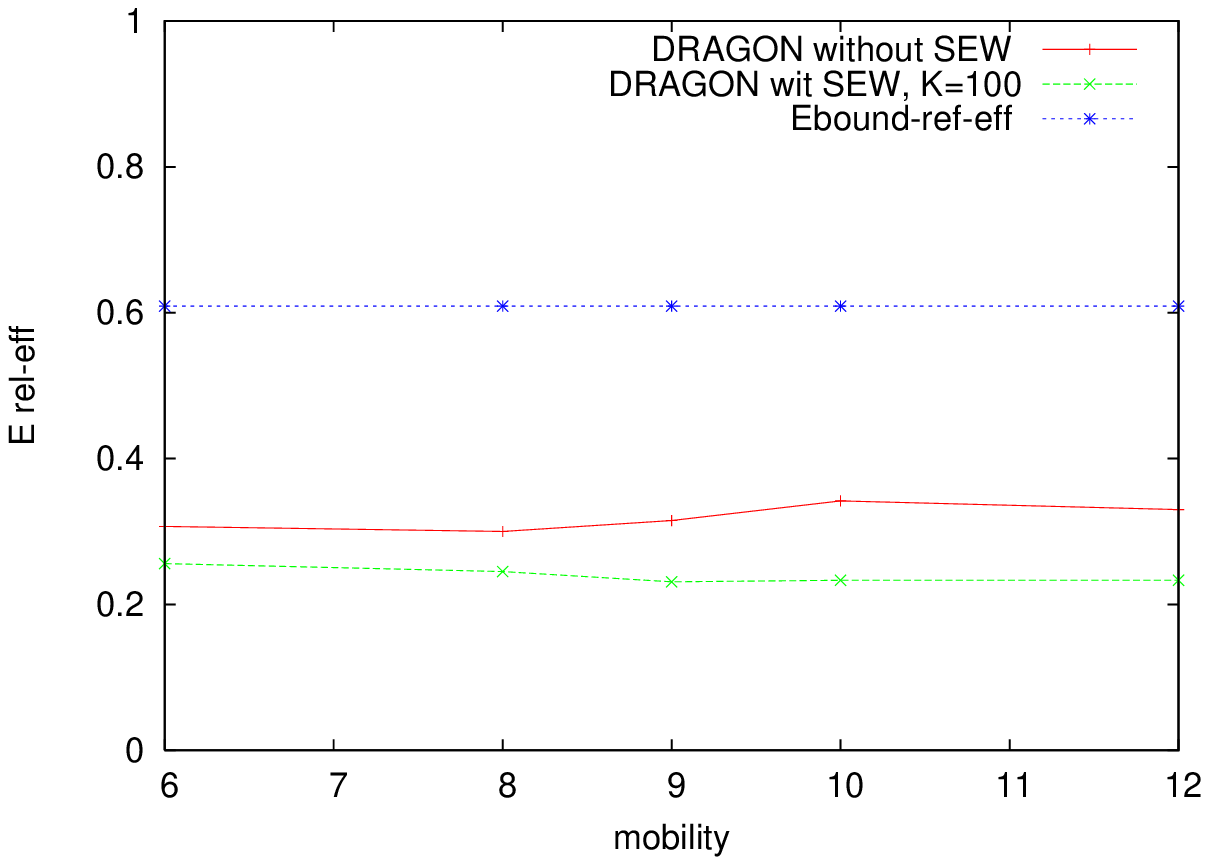}
\label{fig:slow_decoding_dragon_k1000}
}%
\hspace{0in}%
\subfigure[][$RTD$ of IRON  ]{
\includegraphics[width=\rrfigwidth]{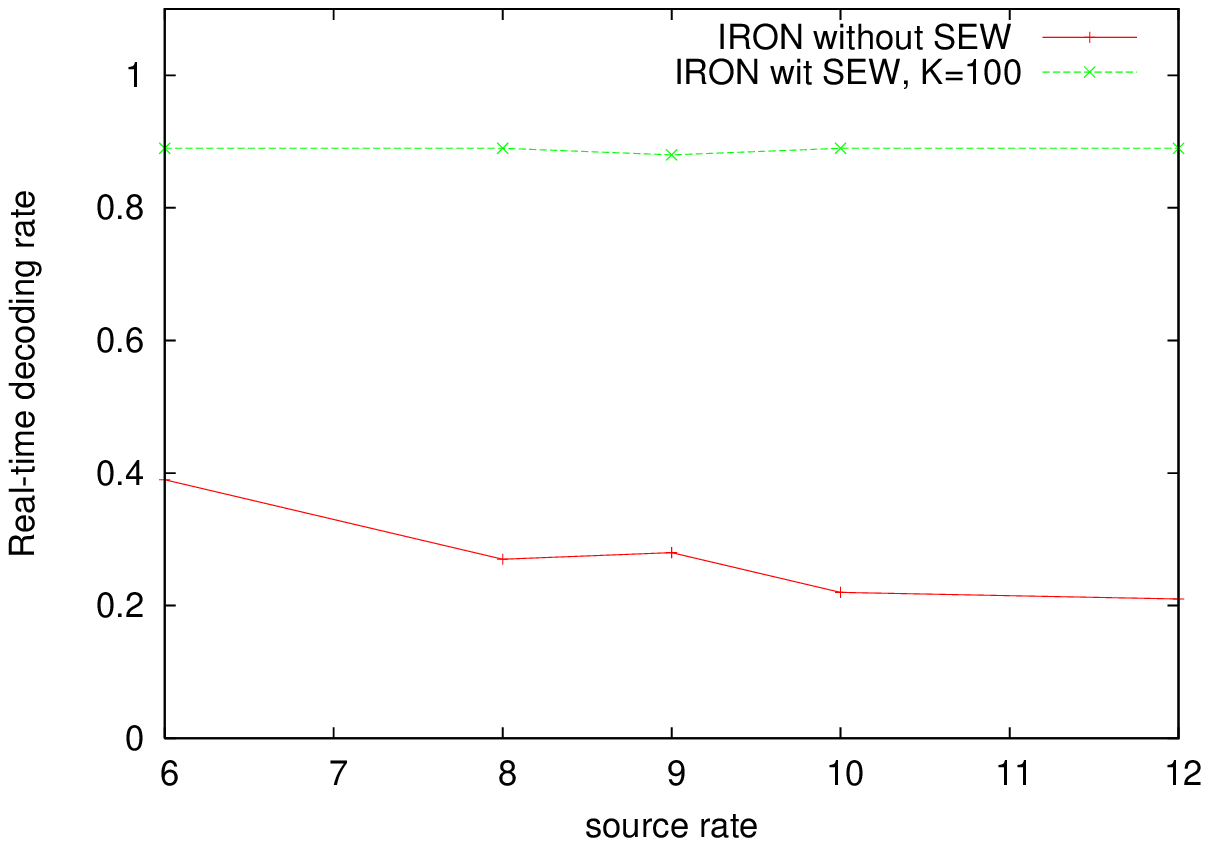}
\label{fig:slow_decoding_iron_K100}
}%
\hspace{0in}%
\subfigure[][$RTD$ of Dragon $\alpha=0.2$ ]{
\includegraphics[width=\rrfigwidth]{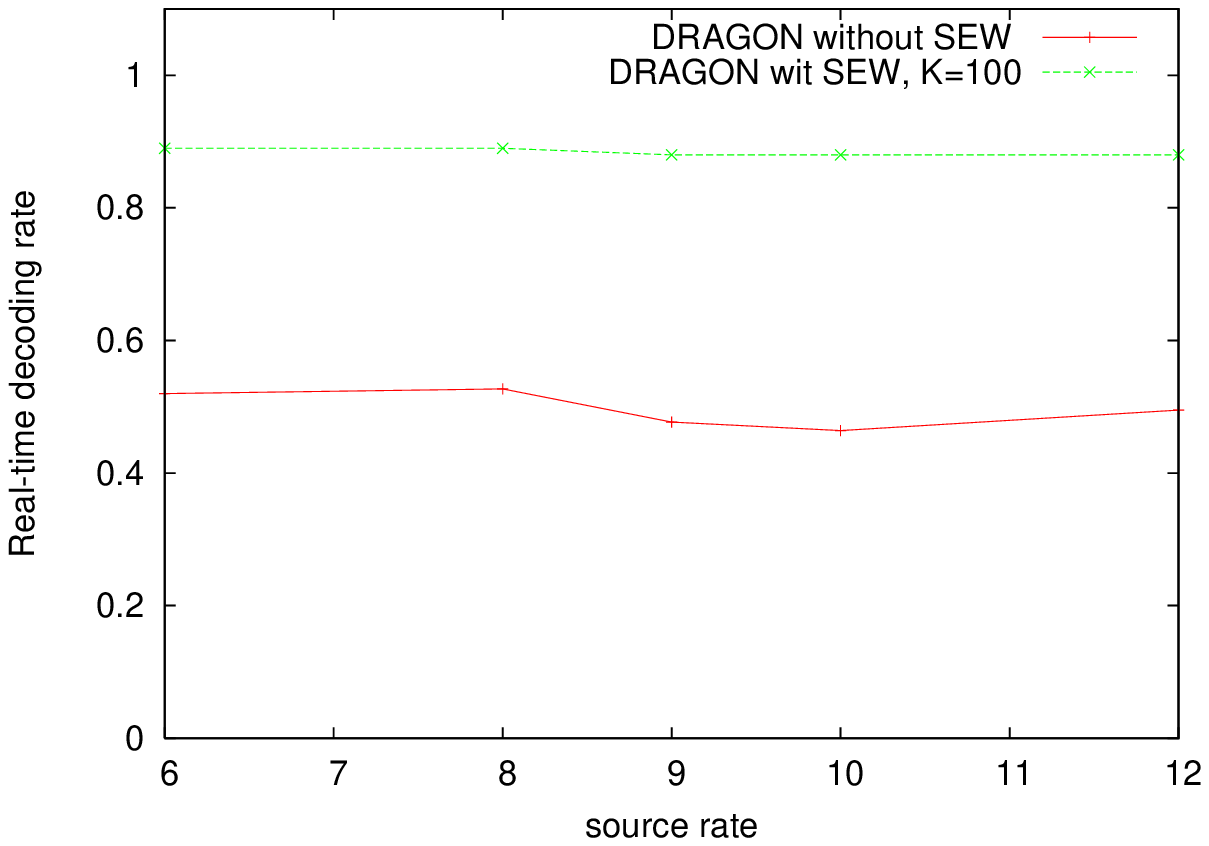}
\label{fig:slow_decoding_dragon_K1000}
}%
\vspace{-1mm} \caption{ $E_{rel-eff}$ and decoding rate with
changing source rate, speed=33m/sec N=200 M=8.867}
\label{fig:slow_mobility_simresult} \vspace{-1mm}
\end{figure*}

Therefore, there exists a natural tradeoff between
energy-efficiency and the amount of real-time decoding. However,
when the rate selection is ensuring globally an uniform, regular,
increase of the ranks of every node, then the gap between two
neighbors would stay limited. This is, for instance, the intent of
DRAGON. In that case, one can hypothesize that the ideal window
size of SEW would be on the order of magnitude of the natural
average gap between two neighbors. The impact of SEW on
energy-efficiency would then be expected to be limited.

\reffig{fig:mobility_simresult} shows the relation between
efficiency and a real-time decoding rate (RTD) in networks with
high mobility.
In \reffig{fig:mobility_simresult}, each value
on x-axis (mobility) represents an average moving speed of a node
(a value $0.25$ corresponds to $275$ m/sec) The source rate was
fixed as $10$ packets per second, a packet of 512 bytes. When
using DRAGON, we tuned the adaption speed by setting the parameter
$\alpha$ to $\alpha=0.5$.

From \reffig{fig:mobility_simresult} we observed several notable
results.

The first one is that, as explained in the previous section, SEW could
improve RTD dramatically, as intended, up to approximately $0.8$
in these simulations. Even if DRAGON would allow some amount of
real-time decoding in some cases with no mobility, also it appears
that these opportunities disappear with more mobility, and  hence
SEW appears as a necessity also here.

The second one is the illustration of the energy-efficiency of
DRAGONCAST: compared to the bound of routing when the network
would be static, it is within a factor $2$ of that absolute
upper-bound for energy-efficiency of routing method (stronger than
the optimal broadcast method). This indicates how the combination
of the simple algorithms of DRAGONCAST and network coding permits
efficient broadcast in a context where broadcast with routing
could be difficult (high mobility).

The last observation is the illustration of the tradeoff between
decoding and energy-efficiency: as one may see, using SEW has an
limited but negative impact on the energy-efficiency of DRAGON.
This impact is more marked for IRON, because generally IRON fails
to uniformly spread information at a rate comparable to the source
rate.

Figure ~\ref{fig:slow_mobility_simresult} shows simulation results
in relatively slow networks ( mobility = $33$ m/sec). These
simulations were done, this time. by varying the source rate
ranging from $6$ packets ($3$ Kbyte) per second to $12$ packets
($6$ Kbyte) per second. For these simulations, the parameter for
adaptation speed with DRAGON was tuned to $\alpha=0.2$. From these
results, represented in \reffig{fig:slow_mobility_simresult}, a
part of the previous observations can be reiterated, but one may
observe new points.

First, DRAGON and DRAGONCAST did sucessfully adapt the rate 
of intermediate nodes to the diverse source
rates as the near-constant energy-efficiency $E_\mathrm{ref-eff}$
of DRAGON shows in
\reffig{fig:slow_decoding_dragon_K1000}. Second, DRAGON does not
lose much efficiency when it is combined with SEW.
\reffig{fig:slow_decoding_dragon_K1000} shows that DRAGON loses
at most $20$ \% efficiency (less than IRON) there.

In summary, the simulation results have shown several interesting
points: the first point is that the algorithm SEW has limited
impact in terms of energy-efficiency. The second one is that SEW
does indeed permit real-time decoding regardless of mobility,
hence it is a necessary component of the protocol DRAGONCAST. The
last point is that the energy-efficiency of DRAGONCAST is quite
satisfying, even compared to a optimistic upper bound of the
optimal non-coding method, and even in networks with notable
mobility.

\section{Conclusion}
\label{sec:conclusion}
We have introduced a new protocol for broadcasting with network
coding in a wireless mobile network: DRAGONCAST. It relies on
three building blocks: a real-time decoding algorithm SEW which
constrains the coded packet transmissions, but allows decoding the
source stream without requiring to wait for its end; a rate
adjustment algorithm, DRAGON, that performs a control so that the
coded source packets are properly propagated everywhere, while still staying
energy-efficient; and a termination protocol.

We evidenced and investigated the performance of these building
blocks, experimentally by simulations. They have shown dramatic
improvement in real-time decoding when SEW is used with a limited
cost in energy-efficiency. They have shown also more generally
that, despite its simplicity, DRAGONCAST is an energy-efficient
protocol, that performs adequately in mobile context.

Future work includes further investigation and modeling
of the relationship
between the parameters and the expected performance.

\end{document}